\newcommand{\tp}{\mathsf{T}}
\newtheorem{theorem}{Theorem}
\newtheorem{corollary}{Corollary}
\newtheorem{proposition}{Proposition}
\newtheorem{definition}{Definition}
\newtheorem{remark}{Remark}
\newtheorem{assumption}{Assumption}
\def\BState{\State\hskip-\ALG@thistlm}
\DeclareMathOperator*{\argmin}{arg\,min}
\begin{document}

\title{Distributed Machine Learning with Strategic Network Design: A Game-Theoretic Perspective}

\author{Shutian~Liu,~\IEEEmembership{Graduate Student Member,~IEEE},
Tao Li,~\IEEEmembership{Graduate Student Member,~IEEE}
        and~Quanyan~Zhu,~\IEEEmembership{Member,~IEEE}
\thanks{The authors are with the Department of Electrical and Computer Engineering, Tandon School of Engineering,
        New York University, Brooklyn, NY, 11201 USA (e-mail: \{sl6803, tl2636, qz494\}@nyu.edu).}       
       
}



\maketitle

\begin{abstract}
This paper considers a game-theoretic framework for distributed machine learning problems over networks where the information acquisition at a node is modeled as a rational choice of a player.
In the proposed game, players decide both the learning parameters and the network structure.
The Nash equilibrium characterizes the tradeoff between the local performance and the global agreement of the learned classifiers.
We first introduce a commutative approach which features a joint learning process that integrates the iterative learning at each node and the network formation. We show that our game is equivalent to a generalized potential game in the setting of undirected networks. We study the convergence of the proposed commutative algorithm, analyze the network structures determined by our game, and show the improvement of the social welfare in comparison with standard distributed learning over fixed networks.
To adapt our framework to streaming data, we derive a distributed Kalman filter.
A concurrent algorithm based on the online mirror descent algorithm is also introduced for solving for Nash equilibria in a holistic manner.
In the case study, we use telemonitoring of Parkinson's disease to corroborate the results. 
\end{abstract}

\begin{IEEEkeywords}
Distributed machine learning, Network formation, Network games.
\end{IEEEkeywords}

\section{Introduction}
\label{sec:intro}
%
%
%
%

\IEEEPARstart{D}{istributed} machine learning has been widely used to handle large-scale machine learning tasks \cite{peteiro2013survey}. It provides a mechanism for naturally distributed data sources in large-scale learning problems. For example, autonomous vehicles collect spatial data in an urban environment and form a V2V communication network to share the data for making better decisions.   In the Internet of Things (IoT) systems, the owners of the devices can share privately-owned security information to reduce their cyber risks collaboratively.  The communication and sharing of information among nodes in the network enables nodes with limited computational resources to improve their learning capabilities through a collaborative mechanism.

The literature of distributed learning has focused largely on the scenario where the goal is to reach a consensus of learning parameters given fixed networks.
However, the setting of fixed networks may restrict the applications of distributed learning schemes.
One example where a fixed network is insufficient is federated learning \cite{mcmahan2017communication}.
In federated learning problems, one aims to design self-ruling agents who decide on their own when to join the distributed learning problem.
Hence, assuming a fixed network for communicating learning parameters or gradient updates violates one of the innovations of federated learning.
Furthermore, federated learning often takes into account the instabilities of communication links between the nodes. 
This instability may arise as a result of a mobile device pausing its local learning process to save power or because of the interference on the wireless communication channels.
Fixed networks certainly cannot resolve the challenge of unstable communications.
Another example is distributed learning with biased data sets. 
When the local data samples of the learning agents follow different distributions, the consensus of learning parameters is not the only target one can aim for.
Instead, one may consider distributed learning with multiple consensuses where subgroups of learning agents agree on different consensuses suitable for their local data distributions.
In this scenario, one needs to figure out the network structure after obtaining the learning parameters of agents.
The fixed network assumption is not feasible in this situation.
Therefore, there is a need to develop new distributed learning paradigms to explore richer network structures.

\begin{figure}[ht]
\centering
\vspace{-5mm}
\includegraphics[width=0.5\textwidth]{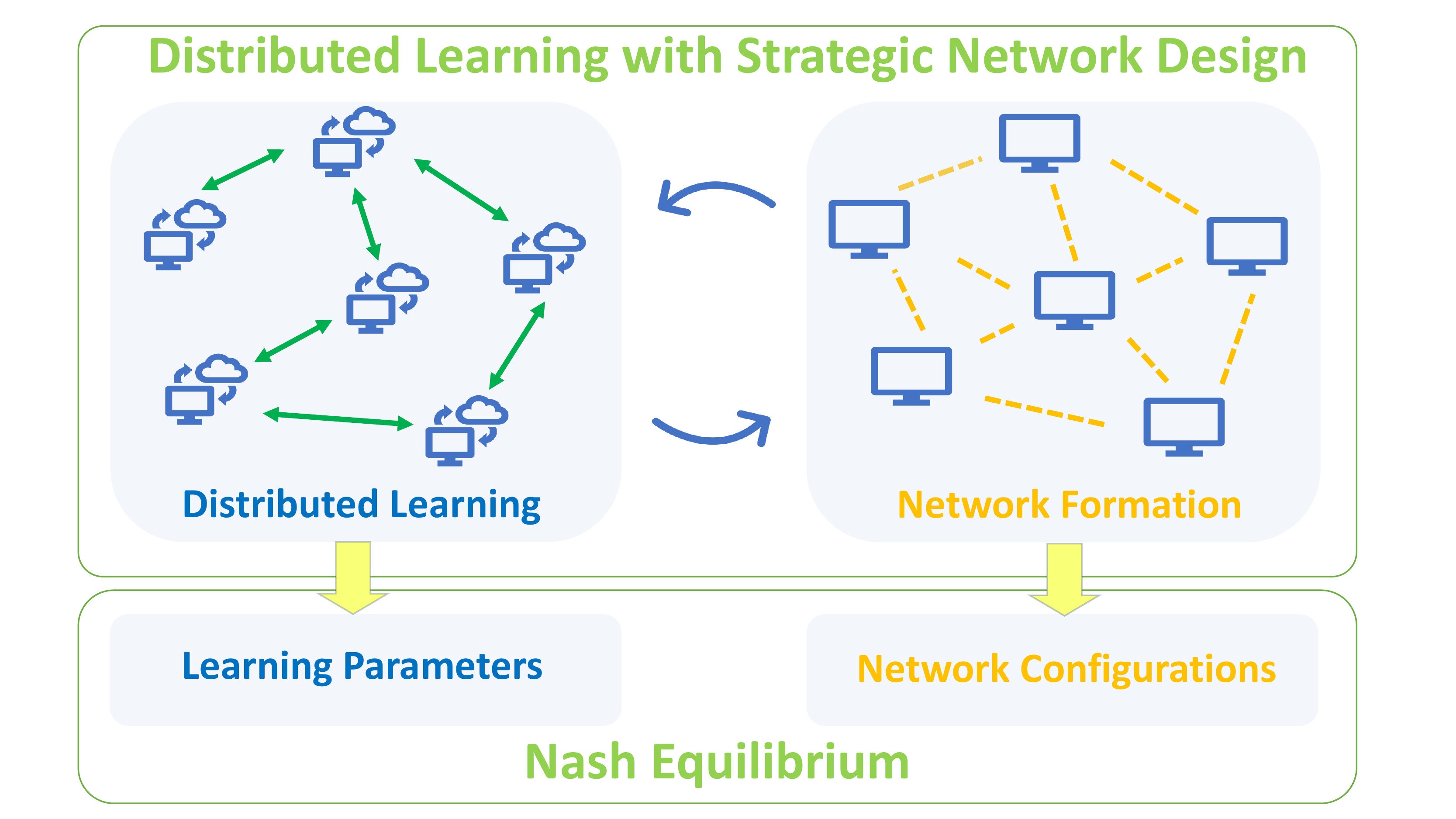}
\vspace{-7mm}\caption[Optional caption for list of figures]{The distributed learning with strategic network design framework. The distributed learning layer conducts distributed machine learning over a fixed network. The network formation layer computes the most efficient network given the learning parameters. Both commutative and concurrent learning methods lead to Nash equilibria.
} 
\label{fig:conceptual}
\end{figure}

In this paper, we introduce a game-theoretic framework for distributed learning with strategic network design.
In the proposed game, we model each learning agent (a node on the network) as a rational player with two distinct actions, as depicted in Fig. \ref{fig:conceptual}.
One action is deciding the optimal learning parameters to yield the minimum learning error based on the local data. The other action is to choose the weights of the links connecting this node to its neighbors. A positive link weight indicates that one player is willing to connect and that she would set her learning parameter close to that of the connected player.
The two actions are interconnected by the design of the players' disutility functions.
The Nash equilibrium of the game requires that no player has incentive to adjust her learning parameter or reconfigure her connections with other players.

Our framework has the following features.
Firstly, the network structure is an outcome of the proposed game.
Since the players in our game decide the link weights by themselves, our framework enriches the literature of distributed learning by enabling agent-configured local connectivities.
This property aids one to figure out the network structure at the Nash equilibrium, which is suitable for the scenarios of federated learning or biased datasets.
Furthermore, this property can be interpreted as the efforts of the players in finding the optimal networked information structure for playing the game.
Secondly, by the design of the players' disutility functions, we can obtain multiple soft consensuses of the learning parameters. 
The multiplicity property makes our framework adaptable to large-scale problems where a single consensus is insufficient to capture the whole learning problem.
By soft consensuses, we refer to the fact that we transform the equality consensus constraints in standard distributed learning problems to players' costs that punish the misalignment of learning parameters. 
This softening not only enables the players to find partners to form consensuses but also aids the computation of equilibria.

To break the coupling of the two actions, we introduce a commutative approach that iterates between the local learning and the global network formation.
In the first layer, players best-respond to the other players learning actions given the current network configuration. 
In the second layer, players refine the network structure based on the state-of-the-art learning parameters.
To support the proposed commutative approach, we prove that our game belongs to the class of weighted potential games if we restrict to undirected networks.
This property leads to the convergence of the algorithm associated with the commutative approach.
We analyze the network structures using an approach inspired by the cohesiveness defined in \cite{morris2000contagion}.
From an optimization perspective, we show that the our game-theoretic framework admits better social welfare compared to a standard distributed learning framework.
Apart from the commutative approach, we also provide a concurrent method to find the NE for a broader class of network structures based on the online mirror descent algorithm.
A distributed Kalman filter is also derived for processing streaming data. 

Finally, our results are corroborated in a case study using data of telemonotoring measurements of Parkinson's disease. 
We further investigate the effects of reference information by comparing the local learning performances at a node when it connects to and disconnects from other nodes.

This paper is organized as follows. Section \ref{sec:related} reviews the related works. Section \ref{sec:framework} presents the proposed game-theoretic framework. In Section \ref{sec:analysis}, we first show the existence of the NE, and then we introduce the commutative approach to finding the NE. In Section \ref{sec:analysis}, we focus on the properties of our framework under undirected networks. We present convergence analysis of the commutative algorithm. We analyze the potential outcomes of network structures and compare the social welfare obtained using our framework and the one obtained using standard distributed learning frameworks.
A concurrent method for equilibrium seeking in the generic setting is presented in Section \ref{sec:single loop}.
We devote Section \ref{sec:streaming} to the adaptation of our framework to streaming data.
Section \ref{sec:case} uses a case study to corroborate our results. 
Section \ref{sec:conclusion} concludes the paper.

\section{Related Works}
\label{sec:related}

Our framework builds on the vast literature on distributed optimization over networks, which lays the foundation of distributed learning.
We refer to the survey papers \cite{peteiro2013survey,yang2019survey} and the references therein for a comprehensive review of distributed optimization over networks and its connection with distributed machine learning.
While most of the existing works focus on the setting of fixed networks, our approach considers an additional optimization procedure to find the optimal network configuration.
This idea is closely related to the scenarios of time-varying networks considered, for instance, in \cite{nedic2014distributed,vyavahare2019distributed,xu2017dynamic}.
The difference lies in that, in our framework, the dynamic changing nature of the network is the result of seeking the optimal network configuration rather than the consequence of a given network dynamics.
Decentralized algorithms also play an important role in distributed learning. We refer the readers to the monograph \cite{boyd2011distributed} and the references therein for algorithms based on the method of multipliers.
Note that convexity and differentiability conditions are essential to the analysis of dual-based algorithms. 
We refer to \cite{uribe2020dual} for the study of convergences of dual-based algorithms under a variety of assumptions.

There is a recent trend in using potential games to model distributed learning problems \cite{van2017distributed,ali2019distributed, marden2012state}. 
One of the advantages of potential games over other types of games lies in that the players' objective functions in a potential game can be described by a potential function which represents the joint target of all the players.
This holistic representation of the players' incentives leads to the conveniences in computing the Nash equilibrium (NE).
In \cite{li2013designing}, the authors have introduced the approach of designing agents' local objective functions to reach a specific system-level equilibrium using state-based potential games proposed in \cite{marden2012state}. 
The formulation of players' disutility functions in our framework is inspired by \cite{li2013designing}.
Furthermore, we equip the players in the game beyond merely the incentives to optimize their local learning parameters.
In particular, the players have incentives to look for the most efficient information structure for their learning tasks.
This additional incentive leads to network configurations at equilibria. 

The proposed learning schemes for equilibrium seeking in this work fall within the realm of decentralized game-theoretic learning \cite{tao-confluence}. Each player updates her strategy at each iteration using only local information, such as neighbors' actions and messages. When dealing with undirected networks, the proposed game problem admits a weighted potential. Motivated by the encouraging results in \cite{dubey2006strategic}, we adopt the best response dynamics in the first layer equilibrium-seeking and prove its convergence to generalized Nash equilibrium. However, the convergence guarantee no longer holds for directed networks, due to asymmetric network structures. Inspired by recent advances on gradient-based learning dynamics \cite{tao-confluence,bravo2018bandit,mertikopoulos2019learning}, the second learning scheme based on online mirror descent is proposed to address directed network structures. Unlike the best-response-based two layer approach, online mirror descent algorithm enables the player to update the learning action and the network formation simultaneously, leading to a concurrent decentralized learning scheme.

The literature on network formation from both engineering and economics are also relevant to us.
Economists have investigated network formation problems using games, for example, in \cite{bramoulle2007public,galeotti2010law}.
These classic works have established analysis of the network patterns resulting from resource allocation and information distribution.
Our work is related to \cite{chen2019interdependent}, where the authors have utilized the games-in-games framework \cite{zhu2015game} to study the security issues in the Internet of Things. 
Efficient network configurations are consequences of the limited cognitive attentions players pay to the others.
In our approach, the pursuit of an efficient network structure comes out of the players' incentives to improve the performance of local learning.
Such an improvement relies on gathering other players' learning parameters for reference.

\section{Game-theoretic Framework}
\label{sec:framework}
In this section, we introduce our game-theoretic framework for distributed machine learning with strategic network design.
In our network setting, we consider a graph with the set of nodes $\mathcal{N}:=\{1,2,...,N\}$. We use $i$ and $j$ to denote typical elements in $\mathcal{N}$. Each node $i\in\mathcal{N}$ possesses local data $\{x_{i,k},y_{i,k}\}_{k=1}^{K_i}$ to be used in the distributed learning task, where $x_{i,k}\in\mathbb{R}^d$, $y_{i,k}\in\{1,-1\}$ and $K_i\in\mathbb{N}$. We can interpret $x_{i,k}$ as the features obtained from pre-processing of raw data and $y_{i,k}$ as the labels. The integer $K_i$ is the total number of available data points at node $i$. In this paper, we use the terms ``node" and ``player" interchangeably. 

\par
We use $m^i=[m^i_j]_{j\neq i,j\in\mathcal{N}}\in[0,1]^{N-1}, \forall i\in\mathcal{N}$ to denote the weights on the directed links from node $i$ to other nodes in $\mathcal{N}$. 
The interpretation of $m^i_j$ can be the willingness of node $i$ in cooperating with node $j$ in the learning task. A positive $m^i_j$ indicates that observing learning information from node $j$ is beneficial for node $i$. While when $m^i_j=0$, node $i$ is not interested in communicating with node $j$, or exchanging information with node $j$ has such a high communication cost that she would prefer maintaining local. We can also interpret $m^i_j$ as the amount of attention node $i$ paid to node $j$. We remark that negative $m^i_j$ is also possible, especially when we consider the scenario where nodes are set to participate in the learning task. When a node has to communicate with other nodes, a negative link weight captures her loss. For the purpose of presentation, we consider only positive link weights.
\par
The actions and the cost functions of players illustrate the structure of the game-theoretic framework. One feature of our framework is that a player has two distinct but interrelated actions. The two actions are associated with the players' learning using local data and their communications with others to exchange learning information. 
\par
Player $i$'s first action, also called the learning action, $u_i\in \mathcal{U}_i \subset \mathbb{R}^d$ bridges her local data and her learning cost. We assume that $\mathcal{U}_i,i\in\mathcal{N}$ are convex and compact. This action can take the form of the weights or parameters of the learning problem. Thus, $x_{i,k}^Tu_i \in \mathbb{R}$ represents player $i$'s label prediction obtained using her learning action $u_i$ and her local data point with index $k$. The $k^{th}$ prediction error is given by $y_{i,k}-x_{i,k}^Tu_i$. The reason why we call $u_i$ an action instead of a weight vector is that $u_{j},j\in\mathcal{N}$ of other players in the game influence the choice of $u_i$. We also refer to $u_i,i\in\mathcal{N},$ as a classifier. 
\par
Player $i$'s second action is the network formation decision $m^i\in\mathcal{M}_i:= [0,1]^{N-1}$ which represents the link weights. By making the link weights actions of the players, we can analyze the network structures formed by the players themselves. In this case, the rationality of player $i$ leads to a choice of $m^i$ in a way that improves her learning result through exchanging information with other players at the cost of communication.
To avoid trivial solutions to our problem, we assume that all players in $\mathcal{N}$ are willing to join the distributed learning task and node $i\in\mathcal{N}$ has a presumed budget $\beta_i\in\mathbb{R}^{++}$ to consume on communicating with other players, i.e., $\mathrm{1}^Tm^i=\beta_i$. 
Hence, we consider the set of network formation actions given by $\mathcal{M}_i=\{m^i:m^i\in[0,1]^{N-1}, \mathrm{1}^Tm^i=\beta_i\}$. 
Note that $\beta_i$ also has the interpretation of the limited attention of player $i$.

\par
The strategy profile of players is $s:=[u, m]=[u_1,...,u_N,m^1,...,m^N]\in \mathcal{S}:= \mathcal{U}\times \mathcal{M}$ with $\mathcal{U}:=\mathcal{U}_1\times \cdots \times \mathcal{U}_N \subset \mathbb{R}^{d\times N}$ and $\mathcal{M}:=\mathcal{M}_1\times\cdots\times\mathcal{M}_N\subset\mathbb{R}^{(N-1)\times N}$. Let $s_i=[u_i,m^i]\in\mathcal{S}_i=\mathcal{U}_i\times\mathcal{M}_i$.
\par

Define the cost function $J^i:\mathcal{S}\rightarrow \mathbb{R}$ of player $i$ as
\begin{equation}
    J^i(u_i,u_{-i},m^i,m^{-i})=\alpha_i l^i(u_i)+\sum_{j\neq i, j\in\mathcal{N}}m^i_j||u_i-u_j||^2_2,
    \label{eqn general cost func}
\end{equation}
where $u_{-i}$ and $m^{-i}$ denote the learning actions and network formation actions of players other than player $i$, respectively.
The mapping $l^i$ us defined as $l^i:\mathcal{U}_i\rightarrow \mathbb{R}$. 
The parameter $\alpha_i\in \mathbb{R}^+$ balances the two sources of costs in $J^i$.

The first term $l^i$ in (\ref{eqn general cost func}) captures the empirical cost of node $i$ induced by the learning task using local data. 
We assume that $l^i$ has already captured the local differences between the nodes, such as the choices of learning methods and the distributions of the local data.
Moreover, even the learning problems themselves may differ for nodes. This consideration is practical, since nodes in a network are distinct in various aspects. 
In practice, we can choose $l^i$ as the sum of a proper loss function of the prediction errors $y_{i,k}-x_{i,k}^Tu_i, k=1,...,K_i$. 
We assume that $l^i, i\in \mathcal{N},$ are continuously differentiable and convex.
We will discuss the scenarios where the loss functions are nonsmooth and nonconvex in Section \ref{sec:single loop}.

The second term $\sum_{j\neq i,j\in\mathcal{N}}m^i_j||u_i-u_j||^2_2$ in (\ref{eqn general cost func}) collects the influences from other players' learning actions through the links of player $i$ specified by $m^i$. 
This expression captures the disagreement of learning actions between player $i$ and other players in the game weighted by the term $m^i_j$. 
In contrast to a common distributed learning task where the weights of neighboring nodes $i$ and $j$ must satisfy the constraint $u_i=u_j$, we relax this hard constraint to a cost induced by the relative distance of $u_i$ and $u_j$. 
This relaxation makes the following considerations related to federated learning possible. 
First, it emphasizes the local learning performances instead of the learning efficiency of a centralized model learned by all the nodes cooperatively. 
Second, it captures the self-rule feature in the federated learning. 
In particular, if a node has little interest in joining the distributed learning task, we can set the weighting parameter $\alpha_i$ of this node to be relatively large, resulting in a negligible contribution from $\Bar{l}^i$ to $J^i$.

The cost of communications is one of the central challenges in distributed learning problems, especially in federated learning problems. 
We show in the following that there is an implicit communication cost included in the cost function $J^i$ due to the budget $\beta_i$ considered in the set $\mathcal{M}_i$.
Given $u$ and $m^{-i}$, the optimization problem for player $i$ to select her network formation action is 
\begin{equation}
    \min_{m^i\in\mathcal{M}_i} J^i(u_i,u_{-i},m^i,m^{-i}).
    \label{eq:player i's network formation problem}
\end{equation}
Consider the scenario where each communication link player $i$ builds induces a cost of $\theta_i\in\mathbb{R}^{++}$.
Then, $\theta_i||m^i||_0$ represents the total communication cost of player $i$ since $||m^i||_0$ represents the number of links.
We know from compressed sensing \cite{donoho2006compressed} that the solution to (\ref{eq:player i's network formation problem}) is sparse when we regularize the objective function $J^i$ with $\theta_i||m^i||_0$.
A commonly used approximation to the nonconvex regularization term $\theta_i||m^i||_0$ is its $l_1-$norm counterpart.
In other words, we obtain sparse networks by solving the following problem:
\begin{equation}
    \min_{m^i\in\mathcal{M}_i} J^i(u_i,u_{-i},m^i,m^{-i})+\theta_i||m^i||_1.
    \label{eq:equivalent player i's network formation problem}
\end{equation}
Furthermore, (\ref{eq:equivalent player i's network formation problem}) is equivalent to (\ref{eq:player i's network formation problem}), for $\theta_i||m^i||_1$ is a constant with respect to $m^i$ under the constraint $m^i\in\mathcal{M}_i$.
Therefore, the network configuration obtained by solving problem (\ref{eq:player i's network formation problem}) not only takes into account the communication costs to players, but also admits a sparse pattern.
The sparsity of the solution can aid the players avoid spending efforts in building unnecessary links with others.

We remark here the reasons why a self-ruled player may not have the incentive to join the distributed learning task. First, the player can possess an adequate amount of high-quality local data. This means that the player can fully rely on herself when there is a local learning task. From the perspective of a large-scale learning problem, data from remote sources may follow different distributions. This further results in little or even negative contribution to the player's local learning task whose goal is to serve local users. Second, cooperation with other nodes in the network can induce losses. The cooperation with a node who has an unstable connection or a relatively weak computation power will lead to a significant delay in the learning process, since parameter updates have to wait until all the gradient computations are finished at the nodes. Cooperation can also cause security issues. Even in the case where all the local data stay local, the exchanged weights or parameters can be used by malicious ones to infer the original data \cite{fredrikson2015model}. This can cause severe leaks of private information, especially when the learning is performed on medical data.

\par
Define $J=[J^1,...,J^N]$. We use the tuple $\mathcal{G}:=\langle \mathcal{N},\mathcal{S},J \rangle$ to denote the game described above. The following definition presents the solution concept of the game $\mathcal{G}$.
\begin{definition}
\label{def:NE}
[NE of $\mathcal{G}$]
A strategy profile $s^*=[u^*,m^*]=[u_1^*,...,u_N^*,m^{1*},...,m^{N*}]$ is an NE of the game defined by the tuple $\mathcal{G}$, if $\forall u_i\in \mathcal{U}_i$ and $\forall m^i\in \mathcal{M}_i$
\begin{equation}
    J^i(u_i^*,u_{-i}^*,m^{i*},m^{-i*})\leq J^i(u_i,u_{-i}^*,m^{i},m^{-i*}), \forall i\in\mathcal{N}.
    \label{eqn NE def of 3 term cost}
\end{equation}
\end{definition}
\par
In general, obtaining the NE of the game $\mathcal{G}$ is non-trivial, since the two actions of players are coupled. To overcome the challenge, we present a commutative decision-making procedure to decouple the problem.

\section{Game-Theoretic Analysis}
\label{sec:analysis}
In this section, we first show the existence of NE.
Then, we introduce the commutative approach to compute the equilibrium strategy profile of the game $\mathcal{G}$ by iteratively performing distributed learning over a fixed network and updating the network configuration under fixed learning parameters.

\subsection{Existence of NE}
\label{sec:analysis:existence}

\begin{assumption}
A player $i\in\mathcal{N}$ has a convex and compact learning action set $\mathcal{U}_i\subset \mathbb{R}^d$ and a network formation action set of $\mathcal{M}_i=\{m^i:m^i\in[0,1]^{N-1}, \mathrm{1}^Tm^i=\beta_i>0\}$. Her local learning cost function $l^i(\cdot)$ is continuously differentiable and convex. The scalars $\alpha_i\in \mathbb{R}^+$ and $\theta_i\in \mathbb{R}^+$.
\label{assump 1}
\end{assumption}
\begin{theorem}
[Existence of NE]
Under Assumption \ref{assump 1}, the $N$-player non-zero sum infinite game $\mathcal{G}$ admits an NE in pure strategies.
\label{coro existence of NE}
\end{theorem}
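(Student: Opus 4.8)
The plan is to obtain the NE from the Debreu--Glicksberg--Fan theorem, which guarantees a pure-strategy equilibrium of a continuous game once three conditions hold: (a) each strategy set $\mathcal{S}_i=\mathcal{U}_i\times\mathcal{M}_i$ is nonempty, convex and compact; (b) each cost $J^i$ is continuous on $\mathcal{S}$; and (c) for every fixed $s_{-i}$ the map $s_i\mapsto J^i(s_i,s_{-i})$ is convex. Granting these, one applies Kakutani's fixed point theorem to the product of the best-response correspondences $s_{-i}\mapsto\argmin_{s_i\in\mathcal{S}_i}J^i(s_i,s_{-i})$, whose values are nonempty and compact by (a)--(b), convex by (c), and whose graph is closed by Berge's maximum theorem; a fixed point is precisely an NE in the sense of Definition~\ref{def:NE}.

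Conditions (a) and (b) are direct from Assumption~\ref{assump 1}. The set $\mathcal{M}_i=\{m^i\in[0,1]^{N-1}:\mathbf{1}^T m^i=\beta_i\}$ is the intersection of a hypercube with an affine hyperplane, hence a compact convex polytope, and it is nonempty provided $0<\beta_i\le N-1$ (which I would state explicitly); $\mathcal{U}_i$ is convex and compact by hypothesis, so $\mathcal{S}_i$ is too, and $\mathcal{S}=\prod_i\mathcal{S}_i$ is a nonempty convex compact subset of a Euclidean space. Continuity of $J^i$ on $\mathcal{S}$ follows because $l^i\in C^1$ is continuous and the coupling sum $\sum_{j\ne i}m^i_j\|u_i-u_j\|_2^2$ is a polynomial in the entries of $(u_i,u_{-i},m^i)$.

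The main obstacle is condition (c). Block by block the convexity is clear, and in $u_i$ it is even strong: $\nabla^2_{u_i}J^i=\alpha_i\nabla^2 l^i(u_i)+2(\mathbf{1}^T m^i)I\succeq 2\beta_i I\succ 0$, using $\alpha_i\ge 0$, $\nabla^2 l^i\succeq 0$ and $\beta_i>0$, while $J^i$ is affine in $m^i$. However, the term $m^i_j\|u_i-u_j\|_2^2$ is bilinear, so $J^i$ need \emph{not} be jointly convex (nor quasiconvex) in $(u_i,m^i)$, and $\argmin_{s_i}J^i(\cdot,s_{-i})$ can fail to be convex-valued; hence textbook Debreu--Glicksberg--Fan does not apply verbatim. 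I would close this gap using the product structure together with the fact that $J^i$ does not depend on $m^{-i}$: the strong convexity makes the learning best response a single-valued Lipschitz map of $(u_{-i},m^i)$, and the network subproblem $\min_{m^i\in\mathcal{M}_i}\sum_{j\ne i}m^i_j\|u_i-u_j\|_2^2$ is a linear program whose optimal set is a face of $\mathcal{M}_i$, hence convex-valued and upper hemicontinuous in $u$. Via the identity $\min_{(u_i,m^i)}J^i=\min_{u_i}\bigl(\alpha_i l^i(u_i)+\min_{m^i\in\mathcal{M}_i}\sum_{j\ne i}m^i_j\|u_i-u_j\|_2^2\bigr)$ one checks that the equilibria of $\mathcal{G}$ are in bijection with those of the reduced game on $\mathcal{U}$ with costs $\hat{J}^i(u)=\alpha_i l^i(u_i)+\min_{m^i\in\mathcal{M}_i}\sum_{j\ne i}m^i_j\|u_i-u_j\|_2^2$, the equilibrium network being recovered from the inner linear program afterwards.

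Thus the crux reduces to existence of an NE of the reduced game $\langle\mathcal{N},\mathcal{U},\hat{J}\rangle$, whose strategy sets are convex compact and whose costs are continuous, but whose costs are again not guaranteed quasiconvex in the own action because the inner minimization over $m^i$ reintroduces nonconvexity in $u_i$. Controlling this last point — for instance by a smoothing/limiting argument on the inner linear program, or by an additional curvature condition on $\alpha_i l^i$ that dominates it so that $\hat{J}^i$ becomes quasiconvex in $u_i$ — is the step I expect to require the most care; everything else is routine verification of the Debreu--Glicksberg--Fan hypotheses followed by the Kakutani fixed-point argument.
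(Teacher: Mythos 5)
Your overall route is the same as the paper's: check that each $\mathcal{S}_i=\mathcal{U}_i\times\mathcal{M}_i$ is nonempty, convex and compact and that $J^i$ is continuous, then invoke the classical existence theorem for continuous games (the paper cites Theorem 1.2 of Fudenberg and Tirole and disposes of the convexity hypothesis by asserting that $J^i$ is ``continuous and quasi-convex in $u_i$ and linear in $m^i$''). Where you depart is in noticing, correctly, that this blockwise statement is not what the theorem requires: it needs quasi-convexity of $J^i$ in the player's \emph{entire} own strategy $s_i=(u_i,m^i)$, and the bilinear coupling $m^i_j\|u_i-u_j\|_2^2$ destroys it. Your worry is not hypothetical. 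Take $d=1$, $N=3$, $u_2=0$, $u_3=10$, $\beta_1=1$ and $\alpha_1 l^1$ negligible: the points $(u_1,m^1)=(0,(1,0))$ and $(10,(0,1))$ both give $J^1\approx 0$ while their midpoint gives $J^1\approx 25$, so the sublevel sets are not convex and the best-response set is disconnected; Kakutani cannot be applied to it as is. You have put your finger on a step that the paper's own proof passes over in silence.

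The genuine gap is that your proposal does not close this hole either, as you yourself concede. The reduction to the game on $\mathcal{U}$ with costs $\hat J^i(u)=\alpha_i l^i(u_i)+\min_{m^i\in\mathcal{M}_i}\sum_{j\neq i}m^i_j\|u_i-u_j\|_2^2$ is legitimate (it uses only that $J^i$ is independent of $m^{-i}$ and that the joint minimization over $(u_i,m^i)$ can be performed innermost over $m^i$), but the inner linear program evaluates to a pointwise minimum of finitely many convex functions of $u_i$, which is generically neither convex nor quasi-convex; the example above realizes exactly this failure. So the ``smoothing or curvature condition'' you defer to the end is not a routine verification but the entire missing content, and as written your argument does not prove the theorem. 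A defensible repair would be to state an additional hypothesis under which quasi-convexity of $\hat J^i$ (equivalently, convex-valuedness of the joint best response) actually holds --- for instance sufficient curvature of $\alpha_i l^i$ relative to the spread of the $u_j$'s --- or, in the undirected case, to bypass fixed-point arguments altogether and obtain an equilibrium as a minimizer of the potential of Theorem 2 over the compact set $\mathcal{S}$. Without one of these, both your argument and the paper's stop one step short.
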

\begin{proof}
Under Assumption \ref{assump 1}, the action set $\mathcal{S}$ is compact and convex with respect to both $u$ and $m$. The cost functions for players in (\ref{eqn general cost func}) are continuous in $u_{-i}$, continuous and quasi-convex in $u_i$ and linear in $m^i$. It follows from Theorem 1.2 of \cite{fudenberg1991game} that there exists a pure-strategy NE $s^*$ in the game $\mathcal{G}$.
\end{proof}
Note that even when the cost functions $l^i(\cdot)$ are quasi-convex, the existence result still holds (See \cite{fudenberg1991game}).

\subsection{Commutative Approach}
\label{sec:analysis:two layer}
Equilibrium seeking is challenging in general. With the introduction of the actions $m^i$, our framework becomes more sophisticated due to the multiplicity of player's actions. To deal with the joint decision-making of determining both the learning parameters and the network configuration, we consider a commutative approach. In the first layer, we consider a fixed network. The players learn the equilibrium strategies $u_i^*,i\in\mathcal{N}$ given the link weights $m^i, i\in\mathcal{N}$. In the second layer, players observe the equilibrium strategies $u_i^*$, and further optimize their cost functions by selecting the link weights $m^i$. We solve $\mathcal{G}$ by iterating the algorithms at the two layers.
\par
\subsubsection{Learning under Fixed Networks}
In the first layer, we adopt a learning scheme for the players to determine the strategies at an NE under fixed network link weights $m$. 
\par
Consider the best-response dynamics. Player $i\in\mathcal{N}$ computes the best responses using (\ref{eqn general cost func}) given the other players learning action $u_{-i}$. The best response of player $i$ is obtained by solving
\begin{equation}
    u_i^*=BR^i(u_{-i},m^i):=\min_{u_i\in\mathcal{U}_i} J^i(u_i,u_{-i},m^i,m^{-i}).
    \label{eqn prob to solve for BR}
\end{equation}
When the cost of local learning $l^i(\cdot)$ takes a specific  form, such as linear or quadratic ones, we can derive the analytical solution of (\ref{eqn prob to solve for BR}) using first-order optimality conditions. In general, we solve the convex optimization problem (\ref{eqn prob to solve for BR}) numerically using standard optimization techniques, such as the interior point method \cite{boyd2004convex}. 
\par
We use best-response dynamics to learn the equilibrium for each player under fixed network. We let $u_{i,t}$ denote the action of player $i$ at time $t$. Given an initialization of actions $u_{i,0}, i\in\mathcal{N}$, under the fixed network $m$, players update their actions according to the following best-response dynamics
\begin{equation}
    u_{i,t+1}=BR^i(u_{-i,t},m^i),
    \label{eqn BRD}
\end{equation}
where $u_{-i,t}=\begin{pmatrix}
    u_{1,t}  & \dots & u_{i-1,t} & u_{i+1,t} & \dots
    & u_{N,t} \\
\end{pmatrix}$ are the actions of players other than player $i$ at time $t$. 
Best-response dynamics are often used for the computation of NE for continuous-kernel games because of its straightforward interpretation.
\par

\par
\subsubsection{Refinement of the Network Structure}
In the second layer, players proceed to obtain the most efficient link weights by further optimizing (\ref{eqn general cost func}) with respect to $m^i$. The network formation problem of player $i$ is
\begin{equation}
    \begin{aligned}
        \min_{m^i} \  \ &J^i(u_i^*,u^*_{-i},m^i) \\
        s.t. \  \ &m^i\in\mathcal{M}_i,
    \end{aligned}
    \label{eqn netowrk formation prob}
\end{equation}
where the learning actions are obtained from (\ref{eqn BRD}) or (\ref{eq:noise_omd}). 
Problem (\ref{eqn netowrk formation prob}) is also a convex optimization problem. 
Note that the solution of (\ref{eqn netowrk formation prob}) is sparse, since it is equivalent to optimizing $J^i(u_i^*,u^*_{-i},m^i)+\theta_i||m^i||_1$.

\subsubsection{Commutative Algorithms}
The commutative approach for solving $\mathcal{G}$ combines the computation of an NE given the network and the efficient link weights under the equilibrium strategies at the NE. We summarize the procedure in Algorithm \ref{alg:1}, whose convergence issues will be discussed later in Section \ref{sec:symmetric}.

\begin{algorithm}
\caption{Commutative Distributed Learning with Strategic Network Design.}\label{alg:1}
\begin{algorithmic}[1]
\State{Initialization of $u_{i}$ and $m^i$, $\forall i \in \mathcal{N}$}
\State Fix $m^i, i\in\mathcal{N}$, update $u$ using the best response dynamics in (\ref{eqn BRD}) for sufficiently many rounds.
\State Fix $u_i, i\in\mathcal{N},$ obtained in step $2$, compute optimal link weights using (\ref{eqn netowrk formation prob}) until convergence.
\State Repeat step $2$ and step $3$ 

\end{algorithmic}
\end{algorithm}

\section{Properties under Undirected Network}
\label{sec:symmetric}
In this section, we study the properties of game $\mathcal{G}$ under the scenario where the links between two nodes are undirected.
In an undirected network, the communication over a link is bi-directional. Furthermore, a link weight between two nodes characterizes the unique attention factor shared by both of the nodes paid to the other node.

\subsection{Global Problem}
\label{sec:symmetric:global problem}
We first state the condition that is essential in the analysis of this section. 
\begin{assumption}
The network specified by $m$ is undirected, i.e. $m^i_j=m^j_i, \forall i,j\in\mathcal{N}, i\neq j$.
\label{asspm 2 sysmetric network}
\end{assumption}
With a undirected network, we conclude the following result.
\par
\begin{theorem}
Under Assumption \ref{assump 1} and Assumption \ref{asspm 2 sysmetric network},
the game $\mathcal{G}$ is a generalized weighted potential game \cite{monderer1996potential} \cite{facchinei2011decomposition}. The potential function is given by
\begin{equation}
    \Phi(u,m) =\sum_{i\in\mathcal{N}}\alpha_il^i(u_i)
    +\frac{1}{2}\sum_{i\in\mathcal{N}}\sum_{j\in\mathcal{N},j\neq i}m_j^i||u_i-u_j||_2^2.
\label{eqn potential func general}
\end{equation}
The $i^{th}$ weight is given by $W_i=diag[\mathrm{1}_d^T \  \ 2\cdot\mathrm{1}_{N-1}^T]^T\in\mathbb{R}^{(d+N-1)\times (d+N-1)}, \forall i\in\mathcal{N}$, where $\mathrm{1}_c$ denotes the $c$-dimensional column vector of all $1$'s.
\label{theorem potential game}
\end{theorem}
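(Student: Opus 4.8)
The plan is to verify directly that $\Phi$ in \eqref{eqn potential func general} is a weighted potential for $\mathcal{G}$ in the sense of \cite{monderer1996potential}, i.e., that for every player $i$ the difference in $J^i$ induced by a unilateral deviation equals the difference in $\Phi$ after rescaling by the weight matrix $W_i$. Concretely, writing the joint action of player $i$ as $s_i=[u_i,m^i]\in\mathbb{R}^{d+N-1}$, I want to show that for any two choices $s_i$, $s_i'$ and any fixed $s_{-i}$,
\begin{equation}
J^i(s_i,s_{-i}) - J^i(s_i',s_{-i}) = \bigl(\text{weighted}\bigr)\bigl[\Phi(s_i,s_{-i}) - \Phi(s_i',s_{-i})\bigr],
\end{equation}
where the weighting is encoded by $W_i=\mathrm{diag}[\mathbf{1}_d^\tp\ \ 2\cdot\mathbf{1}_{N-1}^\tp]^\tp$. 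Since the weighted-potential condition for a twice-differentiable game is most cleanly checked at the level of gradients, I would instead establish the equivalent infinitesimal identity $\nabla_{s_i} J^i(s) = W_i\,\nabla_{s_i}\Phi(s)$ componentwise, and then note that the finite-difference version follows because both sides integrate the same vector field along any path in the (convex) action set $\mathcal{S}_i$. The generalized/"weighted" qualifier is needed precisely because the coupling constraint $\mathbf{1}^\tp m^i=\beta_i$ makes $\mathcal{M}_i$ a shared-constraint set, so the right object is a generalized potential game in the sense of \cite{facchinei2011decomposition}; I would remark that the potential $\Phi$ is defined on the product set $\mathcal{U}\times\mathcal{M}$ and the constraints are orthogonal across players, so the GNEP structure is inherited cleanly.

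The computation splits along the two blocks of $s_i$. For the learning block, $\partial \Phi/\partial u_i = \alpha_i \nabla l^i(u_i) + \sum_{j\neq i} m^i_j (u_i-u_j) + \tfrac12\sum_{j\neq i} m^j_i \cdot 2(u_i - u_j)$, where the first sum comes from the $i$-indexed terms of the double sum in $\Phi$ and the second from the $j$-indexed terms in which $i$ appears as the inner index; under Assumption \ref{asspm 2 sysmetric network} ($m^i_j = m^j_i$) these two sums coincide, giving $\partial\Phi/\partial u_i = \alpha_i\nabla l^i(u_i) + 2\sum_{j\neq i} m^i_j(u_i-u_j)$. On the other hand $\partial J^i/\partial u_i = \alpha_i \nabla l^i(u_i) + 2\sum_{j\neq i} m^i_j(u_i-u_j)$ directly from \eqref{eqn general cost func}. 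These already agree, which accounts for the block $\mathbf{1}_d$ (weight $1$) in $W_i$. For the network block, $\partial\Phi/\partial m^i_j = \tfrac12\|u_i-u_j\|_2^2$ — the factor $\tfrac12$ survives because $m^i_j$ appears in exactly one term of the symmetrized double sum once we treat $m^i_j$ and $m^j_i$ as the same variable — whereas $\partial J^i/\partial m^i_j = \|u_i-u_j\|_2^2$ from \eqref{eqn general cost func}. Hence along the network coordinates $\nabla_{m^i}J^i = 2\,\nabla_{m^i}\Phi$, which is exactly the block $2\cdot\mathbf{1}_{N-1}$ of $W_i$. Assembling the two blocks gives $\nabla_{s_i}J^i = W_i\nabla_{s_i}\Phi$ for all $i$, which is the defining identity of a weighted potential game.

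The step I expect to require the most care is the bookkeeping in the double sum $\tfrac12\sum_i\sum_{j\neq i} m^i_j\|u_i-u_j\|_2^2$: one must be explicit that under Assumption \ref{asspm 2 sysmetric network} each undirected edge $\{i,j\}$ contributes two identical terms $m^i_j\|u_i-u_j\|^2$ and $m^j_i\|u_j-u_i\|^2$, so the $\tfrac12$ makes $\Phi$ count each edge once, and then to correctly attribute, when differentiating with respect to player $i$'s variables, both the "outgoing" terms (inner index $j$, outer index $i$) and the "incoming" terms (outer index $j$, inner index $i$). It is precisely the collapse of these two contributions via symmetry that doubles the $u_i$-gradient relative to the naive reading of $\Phi$ and simultaneously halves the $m^i_j$-gradient relative to $J^i$ — producing the asymmetric weight $W_i$ rather than a uniform scalar, and explaining why the game is only a \emph{weighted} (not exact) potential game. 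I would close by noting that this identity, combined with Theorem \ref{coro existence of NE} for existence and the fact that $\Phi$ is continuous on the compact set $\mathcal{S}$, sets up the convergence arguments for Algorithm \ref{alg:1} to be carried out later in this section.
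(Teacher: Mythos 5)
Your main computation is exactly the paper's proof: differentiate $\Phi$ and $J^i$ blockwise in $s_i=[u_i,m^i]$, use the symmetry $m^i_j=m^j_i$ to collapse the ``incoming'' and ``outgoing'' contributions to $\nabla_{u_i}\Phi$ so that the $u$-block carries weight $1$, and observe that $m^i_j$ appears in exactly one term of the halved double sum, giving $\partial\Phi/\partial m^i_j=\tfrac12\|u_i-u_j\|_2^2$ against $\partial J^i/\partial m^i_j=\|u_i-u_j\|_2^2$ and hence the weight $2$ on the network block. That part is correct and matches the paper. One wording slip: the factor $\tfrac12$ survives precisely because $m^i_j$ and $m^j_i$ are kept as \emph{distinct} variables (coupled only through the constraint), each occurring in a single term of the double sum; if you literally identified them as one variable, the $(i,j)$ and $(j,i)$ terms would merge and the $\tfrac12$ would cancel, so your parenthetical explanation as written argues against the result you correctly state.

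The genuine error is your account of the word ``generalized.'' The budget constraint $\mathrm{1}^\tp m^i=\beta_i$ involves only player $i$'s own variables; it is a private constraint and by itself leaves $\mathcal{G}$ an ordinary (weighted) potential game over a product strategy set. What makes the game \emph{generalized} in the sense of \cite{facchinei2010generalized,facchinei2011decomposition} is Assumption \ref{asspm 2 sysmetric network}: the requirement $m^i_j=m^j_i$ ties player $i$'s feasible choices to player $j$'s chosen action, so the joint feasible region is no longer a Cartesian product and the appropriate solution concept is the generalized NE. Your closing claim that ``the constraints are orthogonal across players, so the GNEP structure is inherited cleanly'' contradicts this --- if the constraints really were orthogonal, nothing about the game would be generalized. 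Replace that sentence with the observation the paper makes at the end of its proof: under Assumption \ref{asspm 2 sysmetric network} the players' action sets are coupled, and that coupling is the sole reason the theorem is stated for the generalized notion.
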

\begin{proof}
Under Assumption \ref{assump 1}, the function (\ref{eqn potential func general}) is continuously differentiable. By computing the partial derivative of (\ref{eqn potential func general}) and (\ref{eqn general cost func}) with respect to $m^i_j$, we obtain
\begin{equation*}
    2\left(\frac{\partial\Phi(u,m)}{\partial m^i_j}\right)
    =||u_i-u_j||^2_2=\frac{\partial J^i(u,m)}{\partial m^i_j}.
    \label{eqn phi derivative m^i_j}
\end{equation*}
The gradient of $\Phi(u,m)$ with respect to $u_i$ is
\begin{equation}
\begin{aligned}
     \nabla_{u_i}&\Phi(u,m)=\alpha_i\nabla_{u_i}l^i(u_i) \\
     &+\sum_{j\neq i, j\in\mathcal{N}}m^i_j(u_i-u_j)
     +\sum_{j\neq i, j\in\mathcal{N}}m^j_i(u_i-u_j).
     \label{eqn phi gradient u_i}
\end{aligned}
\end{equation}
The gradient of (\ref{eqn general cost func}) with respect to $u_i$ is
\begin{equation}
    \nabla_{u_i}\Phi(u,m)=\alpha_i\nabla_{u_i}l^i(u_i) \\
     +2\sum_{j\neq i, j\in\mathcal{N}}m^i_j(u_i-u_j).
     \label{eqn J^i gradient u_i}
\end{equation}
Under Assumption \ref{asspm 2 sysmetric network}, (\ref{eqn phi gradient u_i}) and (\ref{eqn J^i gradient u_i}) are equal. We arrive at the following equality by putting together the gradient and the partial derivatives
\begin{equation}
    \nabla_{s_i} J^i(u,m)=
    W_i \nabla_{s_i} \Phi(u,m).
    \label{eqn gradient of potential}
\end{equation}
This shows that (\ref{eqn potential func general}) is the potential function jointly minimized by all the players in $\mathcal{N}$.
Under Assumption \ref{asspm 2 sysmetric network}, players' action sets are coupled. This completes the proof.
\end{proof}
From now on we consider generalized NE defined in \cite{facchinei2010generalized}, since Assumption \ref{asspm 2 sysmetric network} couples players' action sets. 
Note that the generalized NE is a direct extension of Definition \ref{def:NE} under Assumption \ref{asspm 2 sysmetric network}.
Existence of generalized NE in $\mathcal{G}$ follows directly from Theorem 6 of \cite{facchinei2010generalized}. Following Theorem \ref{theorem potential game}, the game $\mathcal{G}$ can be considered as a constrained optimization problem where the players jointly minimize the potential function (\ref{eqn potential func general}) subject to constraints induced by Assumption \ref{asspm 2 sysmetric network}. According to \cite{monderer1996potential}, the NE of $\mathcal{G}$ can be characterized by the local minima of the potential function, since the incentives of the players to reduce their own costs are captured by the joint minimization of the potential function. This cooperative phenomenon leads to the following result.
\begin{proposition}
Under Assumption \ref{assump 1} and Assumption \ref{asspm 2 sysmetric network}, the best-response dynamics (\ref{eqn BRD}) converges. Moreover, Alg. \ref{alg:1} converges to a local minimum of (\ref{eqn potential func general}), which is a generalized NE of $\mathcal{G}$.
\label{prop brd convergence, alg. convergence}
\end{proposition}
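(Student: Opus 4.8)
The plan is to prove the two assertions in turn. \emph{Stage~A} fixes an undirected network $m$ and shows the inner best-response loop (\ref{eqn BRD}) converges to an NE of the resulting fixed-network subgame; \emph{Stage~B} reads Algorithm~\ref{alg:1} as an alternating minimization of the potential $\Phi$ of Theorem~\ref{theorem potential game} over the two action blocks and shows the outer loop drives $\Phi$ monotonically down to a value attained at a coordinate-wise minimizer, which Theorem~\ref{theorem potential game} together with \cite{monderer1996potential,facchinei2011decomposition} identifies with a generalized NE of $\mathcal{G}$.

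For Stage~A, observe that the gradient identity (\ref{eqn gradient of potential}) restricted to the $u$-block carries weight the identity (the $u_i$-block of $W_i$ is $\mathbf{1}_d$), so for fixed $m$ the map $\Phi(\cdot,m)$ is an \emph{exact} potential of the subgame $\langle\mathcal{N},\mathcal{U},J(\cdot,m)\rangle$. This subgame is well behaved: $\Phi(\cdot,m)$ is continuous and, since each $\alpha_i l^i$ is convex and $\tfrac12\sum_i\sum_{j\ne i}m^i_j\|u_i-u_j\|_2^2$ is jointly convex in $u$ for $m^i_j\ge 0$, it is convex on the compact convex $\mathcal{U}$, hence attains its minimum; moreover each $BR^i$ in (\ref{eqn prob to solve for BR}) is single-valued since the quadratic term contributes $2(\mathbf{1}^{\tp}m^i)=2\beta_i>0$ of strong convexity in $u_i$. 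Along (\ref{eqn BRD}) a unilateral best-response move weakly decreases $J^i$, hence (exact potential) weakly decreases $\Phi$; since $\Phi$ is bounded below on $\mathcal{U}$, $\{\Phi(u_t,m)\}$ converges, and a compactness-plus-continuity argument on the continuous map $u\mapsto(BR^i(u_{-i},m^i))_{i\in\mathcal{N}}$ then forces the iterates to converge to a fixed point of the best-response map, i.e.\ an NE of the subgame, which by convexity of $\Phi(\cdot,m)$ is a global minimizer of $\Phi(\cdot,m)$.

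For Stage~B, step~2 of Algorithm~\ref{alg:1} replaces $u$ by a minimizer of $\Phi(\cdot,m)$ (the output of Stage~A) and step~3 replaces $m$ by a minimizer of $\Phi(u,\cdot)$ over the admissible network set, a linear program because $\Phi$ is affine in $m$ and (under Assumption~\ref{asspm 2 sysmetric network}) the symmetry constraints cut $\mathcal{M}$ to a smaller polytope. Hence $\Phi(u_{t+1},m_{t+1})\le\Phi(u_{t+1},m_t)\le\Phi(u_t,m_t)$, so $\{\Phi(u_t,m_t)\}$ is non-increasing and, bounded below on the compact $\mathcal{S}$, converges. Passing to a convergent subsequence of $(u_t,m_t)$ and using continuity of the two argmin maps, any limit point $(u^*,m^*)$ is block-wise optimal: $u^*$ minimizes $\Phi(\cdot,m^*)$ over $\mathcal{U}$ and $m^*$ minimizes $\Phi(u^*,\cdot)$ over the admissible set, i.e.\ $(u^*,m^*)$ is a local minimizer of $\Phi$ on the coupled feasible set, which by \cite{monderer1996potential,facchinei2011decomposition} and Theorem~\ref{theorem potential game} is a generalized NE of $\mathcal{G}$.

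The main obstacle is the convergence of the \emph{simultaneous} best-response iteration (\ref{eqn BRD}) in Stage~A: for a merely convex potential, parallel (Jacobi) best response can cycle --- e.g.\ two players each matching the other oscillate between two profiles --- so the monotone-decrease argument is not automatic. I expect to close this gap by exploiting structure specific to the fixed-network subgame, namely that it is a game of strategic complements ($\partial_{u_j}\partial_{u_i}J^i=-2m^i_j I\preceq 0$), so the best-response map is monotone on the (box) action sets and, in the spirit of \cite{dubey2006strategic} and Topkis' theorem, the iterates started from an extremal profile converge monotonically; a mild strengthening (e.g.\ strongly convex local losses) makes the best-response map a contraction and yields convergence from any initialization. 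A secondary point is that $\Phi$ is not jointly convex in $(u,m)$ (the term $m^i_j\|u_i-u_j\|_2^2$ is bilinear), so alternating minimization in general only certifies a coordinate-wise minimizer; accordingly I would state the conclusion at the level of the generalized-NE characterization furnished by Theorem~\ref{theorem potential game} and \cite{monderer1996potential,facchinei2011decomposition} rather than claim more.
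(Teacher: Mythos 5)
Your proposal follows essentially the same route as the paper: for fixed $m$ the restricted game is an exact potential game whose potential $\Phi(\cdot,m)$ decreases along best responses and whose minimizer is the limit of (\ref{eqn BRD}), and the outer loop is read as alternating block minimization of $\Phi$, with the generalized-NE identification supplied by Theorem~\ref{theorem potential game} together with \cite{monderer1996potential,dubey2006strategic}. The one point where you go beyond the paper---worrying that the \emph{simultaneous} (Jacobi) update in (\ref{eqn BRD}) could cycle---is a legitimate concern, but the paper does not resolve it either: its proof only verifies the unilateral decrease $\Phi([u_{i,t+1},u_{-i,t}],m)\leq\Phi(u_t,m)$ and implicitly treats the updates as one player at a time, so your Topkis/contraction patch is added care rather than a missing ingredient relative to the paper's argument.
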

\begin{proof}
First, we observe that when $m$ is fixed, the decision of $u$ in $\mathcal{G}$ can be considered as another weighted potential game with the action set reduced to $\mathcal{U}$. This reduced game has the same potential function as (\ref{eqn potential func general}) with only $u$ being the variable. Under Assumption \ref{assump 1}, $\min_{u\in\mathcal{U}}\Phi(u,m)$ is an optimization problem with a convex and compact feasible region and a strictly convex objective function. The strict convexity of the objective comes from the facts that the Hessian matrix of the second term in (\ref{eqn potential func general}) is positive definite and that $l^i(\cdot)$ is convex. Therefore, $\min_{u\in\mathcal{U}}\Phi(u,m)$ admits a unique optimal solution. Consider any $u_t=\begin{pmatrix}
    u_{1,t}^T & u_{2,t}^T & \dots
    & u_{N,t}^T \\
\end{pmatrix}^T\in\mathcal{U}$ which is not the global optimizer of $\min_{u\in\mathcal{U}}\Phi(u,m)$. For any $i\in\mathcal{N}$, the update given by (\ref{eqn BRD}) satisfies  $\Phi([u_{i,t+1},u_{-i,t}],m)\leq \Phi(u_t,m) , \forall i\in\mathcal{N}$. Hence, the updates  only terminate at the minimum of $\Phi(\cdot,m)$, i.e. $u_{i,t+1}=u_{i,t} ,\forall i\in\mathcal{N}$ for some $t$. This proves that (\ref{eqn BRD}) converges.
Theorem \ref{theorem potential game} also implies that $\mathcal{G}$ is also the best-reply potential. Under Assumption \ref{assump 1}, the action set $\mathcal{S}$ is convex and (\ref{eqn potential func general}) is convex separately in $u_i$ and $m^i,i\in\mathcal{N}$. According to Remark 2 of \cite{dubey2006strategic}, Alg. \ref{alg:1} converges to the NE of $\mathcal{G}$.
\end{proof}

\par
Proposition \ref{prop brd convergence, alg. convergence} shows the convenience of the potential game. Apart from the convergence results, the cooperation perspective provided by the potential game also makes the joint computation of $u$
and $m$ possible.

\begin{proposition}
Under Assumption \ref{asspm 2 sysmetric network}, the strategy profile $s^*=[u^*,m^*]$ at a generalized NE of $\mathcal{G}$ is an optimizer of the constrained optimization problem
\begin{equation}
    \begin{aligned}
        \min_{u\in\mathcal{U},m\in\mathcal{M}} & \Phi(u,m)  \\
        s.t. \  \ & m^i_j=m^j_i, \forall i,j\in\mathcal{N}, i\neq j.
    \end{aligned}
        \label{eqn jointly optimize u and m}
\end{equation}
\label{prop joint optimization of u and m}
\end{proposition}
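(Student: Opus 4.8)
The plan is to read the statement off the generalized weighted potential structure established in Theorem~\ref{theorem potential game}, by matching first-order optimality conditions rather than by any direct argument about $\Phi$. Write $X:=\{[u,m]\in\mathcal{U}\times\mathcal{M}: m^i_j=m^j_i\ \forall i\neq j\}$ for the feasible set of \eqref{eqn jointly optimize u and m}. Since Assumption~\ref{asspm 2 sysmetric network} turns the symmetry requirement into a \emph{shared} constraint, I would phrase everything in terms of the variational (normalized) generalized NE --- the notion underlying Theorem~6 of \cite{facchinei2010generalized} and the equilibrium that Alg.~\ref{alg:1} actually produces --- so that a generalized NE $s^*=[u^*,m^*]$ is characterized by the variational inequality $\sum_{i\in\mathcal{N}}\langle\nabla_{s_i}J^i(s^*),\,s_i-s_i^*\rangle\ge 0$ for all $s=[u,m]\in X$ (this uses only that $s_i^*$ minimizes $J^i(\cdot,s_{-i}^*)$ over its feasible set, which is a convex set, so the variational condition is necessary even without joint convexity of $J^i$).

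Next I would substitute the weighted-potential identity $\nabla_{s_i}J^i=W_i\,\nabla_{s_i}\Phi$ from \eqref{eqn gradient of potential}. Because $W_i$ is a positive diagonal matrix, equal to the identity on the $u_i$-block and to twice the identity on the $m^i$-block, and because $X$ factors as $\mathcal{U}\times\{m\in\mathcal{M}:m^i_j=m^j_i\}$ (the $u$-constraints and $m$-constraints do not interact), I can test the displayed inequality separately along feasible $u$-perturbations (take $m=m^*$) and along feasible, symmetry-preserving $m$-perturbations (take $u=u^*$). The first test yields $\sum_i\langle\nabla_{u_i}\Phi(s^*),u_i-u_i^*\rangle\ge 0$ for all $u\in\mathcal{U}$, and the second yields $2\sum_i\langle\nabla_{m^i}\Phi(s^*),m^i-m^{i*}\rangle\ge 0$ for all symmetry-preserving $m\in\mathcal{M}$, where the factor $2$ is harmless. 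Adding the two and using that any $s-s^*$ with $s\in X$ splits as $([u,m^*]-s^*)+([u^*,m]-s^*)$ gives $\langle\nabla\Phi(s^*),\,s-s^*\rangle\ge 0$ for all $s\in X$, i.e. $s^*$ satisfies the first-order (KKT/variational) optimality condition of \eqref{eqn jointly optimize u and m}.

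To upgrade stationarity to ``optimality'' I would invoke the convexity structure already noted in the proof of Proposition~\ref{prop brd convergence, alg. convergence}: $\Phi(\cdot,m)$ is (strictly) convex on $\mathcal{U}$ for each fixed $m$, $\Phi(u,\cdot)$ is linear on $\mathcal{M}$ for each fixed $u$, and $X$ is convex; hence the stationary point $s^*$ is a blockwise minimizer of $\Phi$ over $X$, which is exactly the sense of ``optimizer'' for the potential-game characterization discussed just before Proposition~\ref{prop brd convergence, alg. convergence}. The hard part is the coupling constraint $m^i_j=m^j_i$: once all other players' actions are fixed, a single player's $m^i$-feasible set collapses to a point, so the naive ``concatenate the per-player optimality conditions'' route is vacuous on the network variables; one genuinely needs the variational-equilibrium formulation (together with the observation that the positive diagonal weights $W_i$ leave the KKT system of \eqref{eqn jointly optimize u and m} invariant) to make the aggregation step legitimate. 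A secondary subtlety to flag is that $\Phi$ is only blockwise --- not jointly --- convex in $(u,m)$, so the conclusion is ``optimizer'' in the coordinate-wise/stationary sense rather than a certified global minimizer of \eqref{eqn jointly optimize u and m}.
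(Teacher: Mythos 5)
Your argument is correct in substance, but it is worth knowing that the paper's own proof of Proposition~\ref{prop joint optimization of u and m} is a single sentence --- ``the result follows directly from the property of the generalized weighted potential game'' --- so what you have written is essentially the missing content of that appeal rather than an alternative to it. Your route (characterize the equilibrium by the aggregated variational inequality, substitute the weighted-gradient identity $\nabla_{s_i}J^i=W_i\nabla_{s_i}\Phi$ from \eqref{eqn gradient of potential}, absorb the positive diagonal weights, and decompose feasible directions into separable $u$- and $m$-perturbations) is the standard way to make that one-liner rigorous, and each step checks out: the splitting $s-s^*=([u,m^*]-s^*)+([u^*,m]-s^*)$ is legitimate because the constraints on $u$ and on $m$ do not interact, and the factor of $2$ on the $m$-block is indeed harmless. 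More importantly, your two caveats expose real looseness in the statement that the paper glosses over. First, because the shared symmetry constraint pins down $m^i$ once $m^{-i}$ is fixed, per-player optimality alone says nothing about the network variables, so the aggregation step genuinely requires restricting to variational (normalized) generalized equilibria --- the proposition as written claims the conclusion for \emph{any} generalized NE, which your argument does not (and cannot, by this route) deliver. Second, since $\Phi$ is only blockwise convex in $(u,m)$, first-order stationarity on the feasible set yields a KKT point and blockwise minimizer, not a certified global optimizer of \eqref{eqn jointly optimize u and m}; the paper's use of ``optimizer'' should be read in that weaker sense, consistent with the local-minimum language surrounding Proposition~\ref{prop brd convergence, alg. convergence}. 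In short: no gap in your reasoning, but your proof establishes a correctly qualified version of the proposition, and the qualifications are ones the paper should arguably have stated.
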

\begin{proof}
The result follows directly from the property of the generalized weighted potential game. 
\end{proof}
Problem (\ref{eqn jointly optimize u and m}) is not convex in general. However, $\min_{u\in \mathcal{U}}\Phi(u,m) $ is a convex problem for any given $m$ as stated and $\min_{m\in\mathcal{M}}\Phi(u,m)$ subject to $ m^i_j=m^j_i, \forall i,j\in\mathcal{N}$ is a linear programming problem for any given $u$. Both of these two problems can be efficiently solved numerically. Moreover, since, in general, $\min_{u,m} \Phi(u,m)=\min_{u}\min_{m}\Phi(u,n)=\min_{m}\min_{u}\Phi(u,m)$, we can solve (\ref{eqn jointly optimize u and m}) numerically with existing optimization solvers. 
\par

\subsection{Decentralized Network Formation}
\label{sec:symmetric:decentralized}
The distributed learning under fixed networks using the best-response dynamics (\ref{eqn BRD}) is naturally decentralized. 
However, network formation using (\ref{eqn netowrk formation prob}) under Assumption \ref{asspm 2 sysmetric network} requires the joint effort of all the node to obtain an undirected network.
One observes that the equality constraints induced by Assumption \ref{asspm 2 sysmetric network} can be interpreted as the fact that the players' network formation decisions reach multiple consensuses.
Therefore, in the following, we introduce a decentralized method based on the alternating direction method of multipliers (ADMM) to obtain the network configurations inspired by \cite{boyd2011distributed}. 

Recall that the network formation problem admits the form:
\begin{equation}
\begin{aligned}
     \min_{m\in\mathcal{M}} &\Phi(u,m) \\
    \text{s.t.} \ \ &m^i_j=m^j_i, \forall i,j\in\mathcal{N}.
    \label{eq:network formation symmetric}
\end{aligned}
\end{equation}
Let $z_{ij}\in\mathbb{R}$ for all $i,,\in \mathcal{N}, j>i$ denote the auxiliary variables.
We can reformulate (\ref{eq:network formation symmetric}) using the auxiliary variables as
\begin{equation}
\begin{aligned}
     \min_{m\in\mathcal{M}} &\Phi(u,m) \\
    \text{s.t.} \ \ & m^i_j=m^j_i=z_{ij}, \forall i,j\in \mathcal{N}, j>i.
    \label{eq:network formation symmetric admm}
\end{aligned}
\end{equation}
In the sequel, we derive the ADMM updates associated with (\ref{eq:network formation symmetric admm}).
To simplify notations, we will ignore the constraint $m\in \mathcal{M}$.
The effect of this constraint is no more than involving projections on the updates.

Let $c^i_j=||u_i-u_j||^2_2$ and $c^i=(c^i_1,...,c^i_{i-1},c^i_{i+1},...,c^i_N)$.
The augmented Lagrangian of (\ref{eq:network formation symmetric admm}) with parameter $\rho>0$ is
\begin{equation}
\begin{aligned}
     \mathcal{L}_\rho:=&\frac{1}{2}\sum_{i\in\mathcal{N}}c^T_im^i  \\
     &+\sum_{i\in\mathcal{N}}\sum_{j\in\mathcal{N}, j>i}\left[ \lambda^i_j(m^i_j-z_{ij})+\lambda^j_i(m^j_i-z_{ij} ) \right]  \\
     &+\frac{\rho}{2} \sum_{i\in\mathcal{N}}\sum_{j\in\mathcal{N}, j>i}\left[ ||m^i_j-z_{ij}||^2_2+||m^j_i-z_{ij}||^2_2  \right],
     \label{eq:augmented lagrangian}
\end{aligned}
\end{equation}
where $\lambda^i_j$ denotes the dual variable associated with the constraint $m^i_j=z_{ij}$.
With a bit abuse of notations, we use $a[k]$ to denote the value of variable $a$ at iteration $k$.
Then, the ADMM updates follows from (\ref{eq:augmented lagrangian}) as:
\begin{subequations}
    \begin{equation}
    \begin{aligned}
    m^i_j[k+1]:=&\argmin_{m^i_j}  \ \ \frac{1}{2}c^i_jm^i_j+\lambda^i_j[k](m^i_j-z_{ij}[k])\\
    &+ \frac{\rho}{2}||m^i_j-z_{ij}[k]||^2_2, \forall i,j\in\mathcal{N}, i\neq j,
    \label{eq:admm 1 m}
    \end{aligned}
    \end{equation}
    \begin{equation}
    \begin{aligned}
        z_{ij}[k+1]:=& \frac{1}{2}(m^i_j[k+1] + m^j_i[k+1] )\\
        &+ \frac{1}{2\rho}(\lambda^i_j[k]+\lambda^j_i[k]), \forall i,j\in\mathcal{N}, j>i, 
        \label{eq:admm 1 z}
    \end{aligned}
    \end{equation}
    \begin{equation}
        \lambda^i_j[k+1]:=\lambda^i_j[k]+\rho(m^i_j[k+1]-z_{ij}[k+1]), \forall i,j\in\mathcal{N}, j>i, 
        \label{eq:admm 1 lambda 1}
    \end{equation}
    \begin{equation}
        \lambda^j_i[k+1]:=\lambda^j_i[k]+\rho(m^j_i[k+1]-z_{ij}[k+1]), \forall i,j\in\mathcal{N}, j>i.
        \label{eq:admm 1 lambda 2}
    \end{equation}
    \label{eq:admm 1}
\end{subequations}
Let $\Bar{m}_{ij}=\frac{1}{2}m^i_j+\frac{1}{2}m^j_i$ and $\Bar{\lambda}_{ij}=\frac{1}{2}\lambda^i_j+\frac{1}{2}$.
By substituting (\ref{eq:admm 1 z}) into (\ref{eq:admm 1 lambda 1}) and (\ref{eq:admm 1 lambda 2}), we obtain $\Bar{\lambda}_{ij}[k+1]=0, \forall i,j\in \mathcal{N}, j>i$.
This leads to $z_{ij}[k]=\Bar{m}_{ij}[k], \forall i,j\in \mathcal{N}, j>i$.
Then, (\ref{eq:admm 1}) can be formulated as the following decentralized updates:
\begin{subequations}
    \begin{equation}
    \begin{aligned}
    m^i_j[k+1]:=&\argmin_{m^i_j}  \ \ \frac{1}{2}c^i_jm^i_j+\lambda^i_j[k](m^i_j-\Bar{m}_{ij}[k])\\
    &+ \frac{\rho}{2}||m^i_j-\Bar{m}_{ij}[k]||^2_2, \forall i,j\in\mathcal{N}, i\neq j,
    \label{eq:admm 2 m}
    \end{aligned}
    \end{equation}
    \begin{equation}
        \lambda^i_j[k+1]:=\lambda^i_j[k]+\rho(m^i_j[k+1]-\Bar{m}_{ij}[k+1]), \forall i,j\in\mathcal{N}, j>i, 
        \label{eq:admm 2 lambda 1}
    \end{equation}
    \begin{equation}
        \lambda^j_i[k+1]:=\lambda^j_i[k]+\rho(m^j_i[k+1]-\Bar{m}_{ij}[k+1]), \forall i,j\in\mathcal{N}, j>i.
        \label{eq:admm 2 lambda 2}
    \end{equation}
    \label{eq:admm 2}
\end{subequations}
Note that to perform the updates in (\ref{eq:admm 2}), each node only need to collect the network formation actions of the other nodes.
Adopting (\ref{eq:admm 2}) in Algorithm \ref{alg:1} leads to a fully decentralized method to compute the NE of the game $\mathcal{G}$ when the network is undirected.
The convergence of the ADMM updates (\ref{eq:admm 2}) can be guaranteed under the assumptions of the cost functions discussed in Section \ref{sec:framework}.

\subsection{Network Structure Analysis}
\label{sec:symmetric:network structure}
With Theorem \ref{theorem potential game}, we have discovered that game $\mathcal{G}$ can be included in the class of potential games.
Hence, we can analyze the network structure under cooperative endeavors of all the players, despite the fact that the network structure is influenced by the players' local learning actions. 
In this subsection, we investigate the network structures when nodes in $\mathcal{N}$ possess different levels of budgets $\beta_i$. We focus on (\ref{eqn jointly optimize u and m}) with $u$ fixed. 
\par
Consider the case where the nodes in $\mathcal{N}$ are divided into two mutually exclusive subsets $\mathcal{N}_1$ and $\mathcal{N}_2$, i.e. $\mathcal{N}_1\cup \mathcal{N}_2=\mathcal{N}$ and $\mathcal{N}_1 \cap \mathcal{N}_2=\emptyset$. Let players in $\mathcal{N}_1$ have the same budget $a>0$ and players in $\mathcal{N}_2$ have the same budget $b>0$, and $a\neq b$. Let $\mathcal{I}_n$ denote the index set of links whose endpoints include node $n\in\mathcal{N}$, i.e. $\mathcal{I}_n:=\{(i,j)|i=n \text{ or } j=n, i\in\mathcal{N}, j\in\mathcal{N},i\neq j\}$. We reformulate the constraints on $m^i_j$ in Assumption \ref{assump 1} and Assumption \ref{asspm 2 sysmetric network} as
\begin{subequations}
    \begin{equation}
        \sum_{(i,j)\in\mathcal{I}_n}m^i_j=a, \forall n\in\mathcal{N}_1,
        \label{eqn feasibility group 1}
    \end{equation}
    \begin{equation}
         \sum_{(i,j)\in\mathcal{I}_n}m^i_j=b, \forall n\in\mathcal{N}_2.
        \label{eqn feasibility group 2}
    \end{equation}
\end{subequations}
Let $\mathcal{I}_{\mathcal{N}_1}:=\{(i,j)|i\in\mathcal{N}_1, j\in\mathcal{N}_1,i\neq j\}$ be the index set of links whose endpoints are both in $\mathcal{N}_1$. The counterpart of $\mathcal{N}_2$ is denoted by $\mathcal{I}_{\mathcal{N}_2}:=\{(i,j)|i\in\mathcal{N}_2, j\in\mathcal{N}_2,i\neq j\}$. The index set of links bridging the two subsets $\mathcal{N}_1$ and $\mathcal{N}_2$ is $\mathcal{I}_\times:=\{(i,j)|i\in\mathcal{N}_c, j\in\mathcal{N}_{3-c},\forall c\in\{1,2\} \}$. 
\par
Consider a given network $\Bar{m}^i_j,i,j\in\mathcal{N},i\neq j$, which satisfies Assumptions \ref{assump 1} and \ref{asspm 2 sysmetric network}, and hence (\ref{eqn feasibility group 1}) and (\ref{eqn feasibility group 2}). Suppose that this given network is not optimal. We use it as an initial point of the problem (\ref{eqn jointly optimize u and m}) with fixed $u$. A general way of obtaining an update of the current decision variable in an optimization problem is finding a feasible descent direction and performing line search. Let $p=[p^i_j],p^i_j \in\mathbb{R},i,j\in\mathcal{N},i\neq j$ be a feasible descent direction. By grouping the indices according to $\mathcal{N}_1$ and $\mathcal{N}_2$, we obtain the following equations:
\begin{subequations}
    \begin{equation}
        \sum_{(i,j)\in\mathcal{I}_n\cap \mathcal{I}_{\mathcal{N}_1}}p^i_j
        +\sum_{(i,j)\in\mathcal{I}_n\cap \mathcal{I}_\times} p^i_j=0, \  \ \forall n\in\mathcal{N}_1,
        \label{eqn constraints feasible descent direction gourp 1}
    \end{equation}
    \begin{equation}
         \sum_{(i,j)\in\mathcal{I}_n\cap \mathcal{I}_{\mathcal{N}_2}}p^i_j
        +\sum_{(i,j)\in\mathcal{I}_n\cap \mathcal{I}_\times} p^i_j=0, \  \ \forall n\in\mathcal{N}_2.
        \label{eqn constraints feasible descent direction gourp 2}
    \end{equation}
\end{subequations}
Summing over $n\in\mathcal{N}_1$ in (\ref{eqn constraints feasible descent direction gourp 1}) and over $n\in\mathcal{N}_2$ in (\ref{eqn constraints feasible descent direction gourp 2}) leads to
\begin{subequations}
    \begin{equation}
        \sum_{(i,j)\in \mathcal{I}_{\mathcal{N}_1}}p^i_j
        +\sum_{(i,j)\in \mathcal{I}_\times} p^i_j=0,
        \label{eqn sum feasible descent direction glabourp 1}
    \end{equation}
    \begin{equation}
         \sum_{(i,j)\in \mathcal{I}_{\mathcal{N}_2}}p^i_j
        +\sum_{(i,j)\in \mathcal{I}_\times} p^i_j=0.
        \label{eqn sum feasible descent direction gourp 2}
    \end{equation}
\end{subequations}
Equations (\ref{eqn sum feasible descent direction glabourp 1}) and (\ref{eqn sum feasible descent direction gourp 2}) can be reformulated as
\begin{equation}
     \sum_{(i,j)\in \mathcal{I}_{\mathcal{N}_1}}p^i_j
     = \sum_{(i,j)\in \mathcal{I}_{\mathcal{N}_2}}p^i_j=-\sum_{(i,j)\in \mathcal{I}_\times} p^i_j.
     \label{eqn group wise undirected descent}
\end{equation}
The interpretation of (\ref{eqn group wise undirected descent}) is that the total changes of  weights on the links within the subsets are the same, and they are opposite to the change of the total weight on the links bridging the two subsets. As a consequence, we tend to arrive at following two network structures. 
In the first structure, more links appear inside the subsets and less links appear crossing the subsets.
The second structure contains denser links between the two subsets and sparser links inside the subsets.

We remark here that the analysis above has close connection with the network cohesiveness introduced in \cite{morris2000contagion}.
Since it is challenging to derive the exact cohesiveness of a network obtained by solving (\ref{eqn jointly optimize u and m}), we have focused on the evolution of the link weights rather than the outcomes of the link weights.
Nevertheless, the patterns observed from (\ref{eqn group wise undirected descent}) uncover structural properties of the networks obtained from (\ref{eqn jointly optimize u and m}). 
We leave the explicit cohesiveness of the network to future works.

\subsection{Efficiency of Learning}
\label{sec:symmetric:efficiency of learning}
We devote this subsection to the comparison between the social welfare obtained using a standard distributed learning framework and the social welfare obtained by our game-theoretic framework. 
We will consider a fixed undirected network and focus on the efficiency of learning in the global problem captured by (\ref{eqn potential func general}).
\par
Under Assumption \ref{asspm 2 sysmetric network}, we can reformulate (\ref{eqn potential func general}) as:
\begin{equation}
    \Psi(u) =\sum_{i\in\mathcal{N}}l_i(u_i)
    +\sum_{i,j\in\mathcal{N}, j>i}m_j^i||u_i-u_j||_2^2,
\label{eqn potential func generalized}
\end{equation}
where we have absorbed the scalars $\alpha_i$ into $l_i(\cdot)$. Consider the solution to our framework given by
\begin{equation}
    P_1^*=\min_{u\in\mathcal{U}} \Psi(u),
    \label{eqn P1}
\end{equation}
and the solution to a standard distributed learning problem given by
\begin{equation}
    \begin{aligned}
         P^*_2=&\min_{u\in\mathcal{U}} \sum_{i\in\mathcal{N}}l_i(u_i) \\
        & s.t. \   \ ||u_i-u_j||_2^2=0, \forall i,j\in\mathcal{N},j>i.
        \label{eqn P2}
    \end{aligned}
\end{equation}
The constraint in (\ref{eqn P2}) is equivalent to $u_i=u_j, \forall i,j \in\mathcal{N}, j\neq i$.  The following result shows the relation between $P_1^*$ and $P_2^*$.
\begin{theorem}
Given an arbitrary undirected network, the Nash equilibrium solution (\ref{eqn P1}) lower bounds the distributed learning solution (\ref{eqn P2}), i.e. $P_1^*\leq P_2^*$.
\label{theorem P1<=P2}
\end{theorem}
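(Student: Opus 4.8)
The plan is to treat \eqref{eqn P1} as a penalty relaxation of \eqref{eqn P2} and exploit two facts: the penalty term in $\Psi$ is nonnegative, and it vanishes identically on the consensus set. First I would observe that the feasible region of \eqref{eqn P2}, namely $\mathcal{C} := \{u \in \mathcal{U} : u_i = u_j \ \forall i,j \in \mathcal{N}\}$, is a subset of $\mathcal{U}$, the feasible region of \eqref{eqn P1}. Hence minimizing $\Psi$ over the larger set $\mathcal{U}$ can only yield a value no larger than minimizing $\Psi$ over $\mathcal{C}$:
\begin{equation*}
    P_1^* = \min_{u \in \mathcal{U}} \Psi(u) \leq \min_{u \in \mathcal{C}} \Psi(u).
\end{equation*}

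Next I would evaluate $\Psi$ on $\mathcal{C}$. For any $u \in \mathcal{C}$ we have $u_i - u_j = 0$ for all pairs, so every term $m_j^i \|u_i - u_j\|_2^2$ in the second sum of \eqref{eqn potential func generalized} is zero, leaving $\Psi(u) = \sum_{i \in \mathcal{N}} l_i(u_i)$ for $u \in \mathcal{C}$. Therefore $\min_{u \in \mathcal{C}} \Psi(u) = \min_{u \in \mathcal{C}} \sum_{i \in \mathcal{N}} l_i(u_i) = P_2^*$, since the constraint $u \in \mathcal{C}$ is exactly the constraint $u_i = u_j$ appearing in \eqref{eqn P2}. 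Chaining the two displays gives $P_1^* \leq P_2^*$.

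I would also briefly remark on where nonnegativity of the link weights is used: because $m^i_j \geq 0$ (positive link weights, per the standing convention in Section~\ref{sec:framework}) and $\|u_i-u_j\|_2^2 \geq 0$, the penalty term is nonnegative on all of $\mathcal{U}$, which is the structural reason the relaxation is a genuine lower bound rather than merely a reformulation; this is worth stating even though the inclusion-of-feasible-sets argument already suffices. Honestly there is no hard step here — the only thing to be careful about is making explicit that the comparison is between the \emph{same} objective $\Psi$ restricted to nested feasible sets, and that $\Psi$ collapses to $\sum_i l_i(u_i)$ precisely on the consensus set; once that is spelled out the inequality is immediate. If one wanted a sharper statement, the natural follow-up (which I would flag as future work rather than attempt here) is to quantify the gap $P_2^* - P_1^*$ in terms of the dispersion of the per-node minimizers of $l_i$ and the magnitudes of the $m^i_j$.
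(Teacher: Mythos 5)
Your proof is correct, and it is a cleaner, more elementary rendering of the same underlying fact. The paper proves the theorem via weak duality: it forms the Lagrangian $\mathcal{L}(u,\mu)=\sum_i l_i(u_i)+\sum_{j>i}\mu^i_j\|u_i-u_j\|_2^2$ of \eqref{eqn P2}, observes that $\mathcal{L}$ coincides with $\sum_i l_i(u_i)$ on the consensus set, chains $\inf_{\bar u\in\mathcal{U}}\mathcal{L}(\bar u,\mu)\le\inf_{u\in\mathcal{U}\cap\mathcal{F}}\mathcal{L}(u,\mu)\le\mathcal{L}(u,\mu)$, and then specializes the multipliers to $\mu=m$ and $u$ to the optimizer of \eqref{eqn P2}. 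Since $\mathcal{L}(\cdot,m)$ is exactly $\Psi$, the paper's chain of inequalities is literally your nested-feasible-sets argument, just wrapped in duality language. What your version buys is transparency: it makes explicit that the bound needs nothing beyond (i) the consensus set being a subset of $\mathcal{U}$ and (ii) the penalty vanishing identically on it — and, as you correctly flag, nonnegativity of the $m^i_j$ is not actually required for the inequality (the penalty vanishes on $\mathcal{C}$ regardless of sign), even though the theorem's hypotheses and the subsequent Corollary \ref{coro social walfare} do use it. What the paper's framing buys is the interpretive payoff exploited immediately after the theorem: viewing $m$ as a particular choice of dual variables supports the ``worst-case network structure'' reading of the optimal multipliers of \eqref{eqn P2}, which your direct argument does not surface. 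One minor point worth adding in either version: if $\bigcap_i\mathcal{U}_i=\emptyset$ the consensus set is empty and $P_2^*=+\infty$, so the inequality holds vacuously; both proofs implicitly assume the constrained problem is feasible.
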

\begin{proof}
Consider the Lagrangian of the constrained optimization problem (\ref{eqn P2})
\begin{equation*}
    \mathcal{L}(u,\mu)=\sum_{i\in\mathcal{N}}l_i(u_i)+\sum_{i,j\in\mathcal{N}, j>i}\mu_j^i||u_i-u_j||_2^2,
\end{equation*}
where $\mu^i_j\in\mathbb{R}, \forall i,j\in\mathcal{N},j>i$ denote the dual variables. Let $\mathcal{F}$ denote the set of $u$ that satisfies the constraints of (\ref{eqn P2}).
We obtain for any choice of $\mu^i_j\in\mathbb{R}, \forall i,j\in\mathcal{N},j>i$ and any feasible $u\in\mathcal{U}\cap \mathcal{F}$ that
\begin{equation*}
    \sum_{i\in\mathcal{N}}l_i(u_i)=\mathcal{L}(u,\mu). 
\end{equation*}
Define the dual function as $g(\mu)=\inf_{u}\mathcal{L}(u,\mu)$. For any choice of $\mu^i_j\in\mathbb{R}, \forall i,j\in\mathcal{N},j>i$ and any $u\in\mathcal{U}\cap \mathcal{F}$,
we arrive at the following inequality
\begin{equation*}
    \inf_{u\in\mathcal{U}\cap \mathcal{F}}\mathcal{L}(u,\mu)\leq\mathcal{L}(u,\mu)=\sum_{i\in\mathcal{N}}l_i(u_i).
\end{equation*}
The following inequality follows:
\begin{equation}
    \inf_{\Bar{u}\in\mathcal{U}}\mathcal{L}(\Bar{u},\mu)\leq
    \inf_{u\in\mathcal{U}\cap \mathcal{F}}\mathcal{L}(u,\mu)\leq\mathcal{L}(u,\mu)=\sum_{i\in\mathcal{N}}l_i(u_i),
    \label{eqn dualirt inequality}
\end{equation}
since $\mathcal{U}\cap \mathcal{F}$ is a subset of $\mathcal{U}$.
Let $u^*$ denote the optimizer of (\ref{eqn P2}). It is clear that  $u^*\in\mathcal{U}\cap\mathcal{F}$. Observing that (\ref{eqn dualirt inequality}) holds for any choice of dual variables $\mu^i_j\in\mathbb{R}, \forall i,j\in\mathcal{N},j>i$ and for any choice of $u\in\mathcal{U}\cap \mathcal{F}$, we pick $\mu_j^i=m_j^i,\forall i,j\in\mathcal{N}, j>i$ and $u=u^*$ in (\ref{eqn dualirt inequality}). Then, we obtain
\begin{equation}
    \inf_{\Bar{u}\in\mathcal{U}}\mathcal{L}(\Bar{u},m)\leq
    \inf_{u\in\mathcal{U}\cap \mathcal{F}}\mathcal{L}(u,m)\leq\mathcal{L}(u^*,m)=\sum_{i\in\mathcal{N}}l_i(u_i^*).
    \label{eqn dualirt inequality with m and u*}
\end{equation}
In (\ref{eqn dualirt inequality with m and u*}), the term $\inf_{\Bar{u}\in\mathcal{U}}\mathcal{L}(\Bar{u},m)$ coincides with (\ref{eqn P1}) and the term $\sum_{i\in\mathcal{N}}l_i(u_i^*)$ matches (\ref{eqn P2}). Finally, we conclude for arbitrary choice of $m$ that $P_1^*\leq P_2^*$.
\end{proof}
Theorem \ref{theorem P1<=P2} centers around the distinction between the influence of demanding a single classifier $u_i=u_j,\forall i,j\in\mathcal{N},i\neq j,$ and the effect of punishing the disagreements captured by $||u_i-u_j||_2^2,\forall i,j\in\mathcal{N},i\neq j$. 
While the disagreement term $\sum_{i\in\mathcal{N}, j>i}m_j^i||u_i-u_j||_2^2$ contributes positively to the cost function given nonnegative link weights, Theorem \ref{theorem P1<=P2} shows that the punishment induced by the disagreement is less critical than requiring $||u_i-u_j||_2^2,\forall i,j\in\mathcal{N},i\neq j$. One explanation of this phenomenon is from the saddle-point interpretation of duality \cite{boyd2004convex}. Namely, the constrained optimization problem (\ref{eqn P2}) is intrinsically a min-max problem. The dual variables $\mu^i_j, \forall i,j\in\mathcal{N},j>i$ obtained in the worst-case sense coincide with the result of maximizing (\ref{eqn potential func generalized}) over $m$.
\par
The worst-case network structure interpretation of the dual variables of (\ref{eqn P2}) is practically useful. When we use a primal-dual type of algorithm to solve (\ref{eqn P2}), the optimal dual variables indicate which network structure we should avoid. Consequently, we can refine the network structures based on past experiments using (\ref{eqn P2}).
\par
Next, we leverage Theorem \ref{theorem P1<=P2} to analyze the social welfare of learning.
Define the social welfare of learning using the classifiers $u_i,i\in\mathcal{N}$ as $\mathcal{W}(u)=-\sum_{i\in\mathcal{N}}l_i(u_i)$.
\begin{corollary}
Given any undirected network with nonnegative link weights, the social welfare of (\ref{eqn P1}) upper bounds the social welfare of (\ref{eqn P2}).
\label{coro social walfare}
\end{corollary}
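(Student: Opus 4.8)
The plan is to obtain the corollary directly from Theorem~\ref{theorem P1<=P2} together with the sign structure of the two objectives. Write $u^{(1)}$ for an optimizer of the Nash-equilibrium problem (\ref{eqn P1}) and $u^{(2)}$ for an optimizer of the consensus problem (\ref{eqn P2}); both exist because $\mathcal{U}$ is compact and the objectives are continuous. By definition $P_1^*=\Psi(u^{(1)})$, and $P_2^*$ is the optimal value of (\ref{eqn P2}) attained at $u^{(2)}$.

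First I would note that at the consensus optimizer the disagreement penalty vanishes: the constraint $||u_i-u_j||_2^2=0$ forces $u_i^{(2)}=u_j^{(2)}$ for all $i,j\in\mathcal{N}$, so $\sum_{i,j\in\mathcal{N},\,j>i} m_j^i||u_i^{(2)}-u_j^{(2)}||_2^2=0$ and hence $P_2^*=\sum_{i\in\mathcal{N}} l_i(u_i^{(2)})=-\mathcal{W}(u^{(2)})$.

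Next I would use the nonnegativity of the link weights to bound $P_1^*$ from below by the pure learning cost at $u^{(1)}$: since $m_j^i\ge 0$ and $||u_i^{(1)}-u_j^{(1)}||_2^2\ge 0$, the second term of $\Psi$ is nonnegative, so $-\mathcal{W}(u^{(1)})=\sum_{i\in\mathcal{N}} l_i(u_i^{(1)})\le \Psi(u^{(1)})=P_1^*$. Chaining this with Theorem~\ref{theorem P1<=P2} gives $-\mathcal{W}(u^{(1)})\le P_1^*\le P_2^*=-\mathcal{W}(u^{(2)})$, and negating yields $\mathcal{W}(u^{(1)})\ge \mathcal{W}(u^{(2)})$, which is exactly the asserted bound.

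The argument has no genuinely hard step; the only point requiring attention is that the two welfares are evaluated at different arguments---the left-hand side at the equilibrium classifiers solving (\ref{eqn P1}) and the right-hand side at the consensus classifiers solving (\ref{eqn P2}), not at a common profile---so the conclusion concerns the two \emph{optimal} social welfares rather than being a pointwise inequality. Once that is kept straight, the result is immediate from Theorem~\ref{theorem P1<=P2} and the sign of the penalty term. Alternatively, one could inline the weak-duality estimate from the proof of Theorem~\ref{theorem P1<=P2} to make the corollary self-contained, but invoking the theorem is cleaner.
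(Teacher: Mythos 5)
Your argument is correct and is essentially identical to the paper's own proof: both bound $-\mathcal{W}$ at the equilibrium optimizer by $P_1^*$ using the nonnegativity of the penalty term, identify $P_2^*$ with $-\mathcal{W}$ at the consensus optimizer since the constraint kills the disagreement term, and chain these with Theorem~\ref{theorem P1<=P2}. Your added remark that the two welfares are evaluated at different optimizers is a useful clarification but does not change the substance.
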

\begin{proof}
Let $\Bar{u}$ denote the optimizer of (\ref{eqn P1}) and let $\Tilde{u}$ denote the optimizer of (\ref{eqn P2}).
With $m^i_j\geq 0, \forall i,j\in\mathcal{N},j>i$, we obtain $\mathcal{W}(\Bar{u})=-\sum_{i\in\mathcal{N}}l_i(\Bar{u}_i)\geq -P_1^*$. On the other hand, $\mathcal{W}(\Tilde{u})=-P_2^*$. Since $P_1^*\leq P_2^*$ according to Theorem \ref{theorem P1<=P2}, we conclude that $\mathcal{W}(\Bar{u})\geq \mathcal{W}(\Tilde{u})$. 
\end{proof}
Our framework guarantees an appealing social welfare compared to a standard distributed learning framework. In other words, our framework enables a way to find the optimal network structures which is free of concerns about the efficiency of distributed learning.

\section{Concurrent Equilibrium Seeking}
\label{sec:single loop}
Recall that under the undirected network assumption, \Cref{alg:1} converges to a generalized Nash equilibrium of $\mathcal{G}$. However, the convergence no longer holds under directed network, as $\mathcal{G}$ may not admit a weighted potential game as suggested in \Cref{theorem potential game}. In this case, the best response dynamics in \eqref{eqn BRD} may lead to a divergent learning process. For example, even in a strictly convex game where the NE is unique, it has been shown in \cite{barron10brd_fail} that best response dynamics may fail to reach the NE.

To address this convergence issue under directed network,  we propose a concurrent learning scheme based on online mirror descent (OMD) \cite{mertikopoulos2019learning}. Unlike the commutative algorithm in \Cref{alg:1}, the OMD-based one enables each player to update her learning action $u_i$ and the link weights $m^i$ simultaneously within each iteration. It will be shown later in this section that this concurrent equilibrium seeking algorithm converges to an NE of $\mathcal{G}$. 

Apart from  the convergence issue, the concurrent learning scheme considers the scenario where the inter-agent communication is subject to random noises. This suggests that a node $i\in\mathcal{N}$ may not observe the exact actions $u_{-i}$ because of the imperfect communication links. Namely, each node receives a noised feedback $\hat{u}_{-i}$ from its neighbors at each iteration. The following presents the OMD-based learning algorithm, and we begin our discussion with the distance-generating function, which is a generalization of euclidean distance (i.e., $\ell_2$-norm). 

\subsection{Multi-agent Online Mirror Descent} Online mirror descent is a class of online convex optimization techniques. The basic idea is that a new iterate is generated by taking a so-called ``mirror step'' from the last one along the direction of an ``approximate gradient'' vector, which is produced by the prox-mapping. The definition of prox-mapping relies on the idea of distance-generating function.  
\begin{definition}
For a real-valued lower-semi-continuous convex function $h$, if $\operatorname{dom}(h) = \mathcal{S}$ and 
\begin{enumerate}
\item the subdifferential of $h$ admits a continuous selection, i.e., there exists a continuous mapping $\nabla h$ such that for all $s\in \{s\in \mathcal{S}|\partial h(s)\neq \varnothing\}:=\operatorname{dom}(\partial h)$, $\nabla h(s)\in \partial h(s)$,
\item $h$ is $K-$ strongly convex, i.e., for $s\in \operatorname{dom}(\partial h)$ and $s'\in \mathcal{S}$, $$h\left(s^{\prime}\right) \geq h(s)+\left\langle\nabla h(s), s^{\prime}-s\right\rangle+\frac{K}{2}\left\|s^{\prime}-s\right\|^{2},$$
\end{enumerate}
then we say that $h$ is a distance-generating function on $\mathcal{S}$.
\end{definition}
Based on this distance-generating function, we can define a pseudo-distance or more widely known as Bregman divergence \cite{gong93bregman} via the relation
$$D(s,x)=h(s)-h(x)-\langle \nabla h(x),s-x\rangle,$$ for all $s\in \mathcal{S}$ and $s\in \operatorname{dom}(\partial h)$. Though $D$ may be directed or fail to satisfy the triangle inequality, thanks to the convexity of $h$, we can rely on $D$ to show the convergence of some sequences. Since the key property for proving the convergence of the proposed learning scheme relies on the fact that  $D(s,x)\geq \frac{1}{2}K\|x-s\|^2$, we discuss the properties of the Bregman divergence that are helpful for convergence analysis in the supplementary material. 

Finally, we arrive at the prox-mapping in the context of the game $\mathcal{G}$. Let $v_i=\nabla_{s_i}J^i(s_i,s_{-i})$ denote the payoff gradient of player $i$ at $(s_i,s_{-i})$ and $D_i(\cdot,\cdot)$ denote the Bregman divergence related to the distance-generating function $h_i$ of player $i$. The prox-mapping is given by
\begin{align}
    P_{s_i}(v_i)=\argmin_{s_i'\in \mathcal{S}_i}\{\langle v_i, s_i-s_i'\rangle+D_i(s_i',s_i)\},\text{ for }s_i\in\mathcal{S}_i.
    \label{eq:prox-map}
\end{align}

\begin{remark}[Relation to Gradient Methods]
Note that when the distance-generating function is given by $h_i(s)=\frac{1}{2}\|s\|^2$, \eqref{eq:prox-map} yields the Euclidean projection 
\begin{align}
   P_{s_i}(v_i)=\argmin_{s_i'\in \mathcal{S}_i}\{\|s_i+v_i-s_i'\|^2\}, 
\end{align}\label{eq:proj-gd}
and the resulting iterative algorithm is referred to as projected gradient ascent \cite{nesterov04book}. Hence, gradient descent (ascent) can be viewed as a special case of mirror descent. Thanks to the generic distance-generating function $h$, mirror descent allows more freedom when designing learning algorithms. For example, when $h(s)=\frac{1}{2}\|s\|_p^2, 1<p<2$, mirror descent works favorably for sparse problems \cite{lei18md-hp}. The sparsity induced by the specific distance-generating function motivates the application of mirror descent in this concurrent learning, as the sparse networks are desired under constrained communications.
\end{remark}


Under \eqref{eq:prox-map}, player $i$'s recursive scheme with variable step-size $\gamma_n$ is given by 
\begin{align}\label{eq:omd}
    s_i^{n+1}=P_{s_i^n}(-\gamma_n v_i^n),
\end{align}
where $s_i^n$ is player's current action at iteration $n$, and $v_i^n$ is the individual payoff gradient accordingly. Note that $v_i^n=\nabla_{s_i} J^i(s_i^n,s^n_{-i})=[\nabla_{u_i}J^i(s^n_i,s^n_{-i}), \nabla_{m^i}J^i(s^n_i,s^n_{-i})]$, and direct calculations give 
\begin{align*}
    &\nabla_{u_i}J^i(s^n_i,s^n_{-i})=\alpha_i \nabla l^i(u_i^n)+2\sum_{j\neq i, j\in \mathcal{N}}m_j^{i,n}(u_i^n-u_j^n),\\
    &\nabla_{m^i} J^i(s_i^n, s_{-i}^n)=[\|u_i^n-u_1^n\|^2,\ldots, \|u_i^n-u_j^n\|^2, \ldots].
\end{align*}
To compute the individual payoff gradient, each player only needs $u_j^n$ from the neighboring nodes, and \eqref{eq:omd} can be implemented independently by each player in the learning process. In the multi-agent setting, the objective function $J^i$ is jointly determined by $s_i^n$ and $s_{-i}^n$, and each player faces a moving-target problem $J^i(\cdot, s_{-i}^n)$, which is referred to as the curse of nonstationarity \cite{tao_info}. To distinguish from its single-agent counterpart where the objective function is stationary, \eqref{eq:omd} is referred to as the online mirror descent \cite{mertikopoulos2019learning}. 


Define $D(\cdot,\cdot)$ as the Bregman divergence related to the joint distance generating function $h(s)=\sum_{i\in \mathcal{N}}h_i(s_i)$.
Concatenating $s_i$ and $v_i$ using $s^n:=(s_i^n)_{i\in \mathcal{N}}$ and $v^n:=(v_i^n)_{i\in \mathcal{N}}$,
we obtain the multi-agent OMD based on (\ref{eq:omd}) as
\begin{align}\label{eq:multi_omd}
    s^{n+1}=P_{s^n}(-\gamma_n v^n),
\end{align}
where $P_s(v):=\argmin_{s'\in \mathcal{S}}\{\langle v, s-s'\rangle+ D(s',s)\}$. It is straightforward to see that the $i$-th component of $s^{n+1}$ given by (\ref{eq:multi_omd}) coincides with the one in (\ref{eq:omd}). We use this concise expression in (\ref{eq:multi_omd}) for convergence analysis.

In practice, players may receive noisy learning parameters $(\hat{s}_{j})$ from the neighbors, since the communication process may be unreliable or imperfect. To address this concern, one can consider the multi-agent OMD with stochastic gradient given by
\begin{align}\label{eq:noise_omd}
     s^{n+1}=P_{s^n}(-\gamma_n \hat{v}^n),
\end{align}
where $\hat{v}^n$ is the payoff gradient obtained using the noisy parameters $\hat{s}_n$. To streamline our convergence analysis, we focus on the deterministic case in \eqref{eq:multi_omd}, and the convergence results can be extended to the stochastic case under the martingale noise assumption \cite[section 4.1]{bravo2018bandit}.

\subsection{Convergence Analysis}
One of the promising properties of OMD is that under mild conditions, we guarantee that the sequence $\{s^n\}$ generated by (\ref{eq:noise_omd}), or equivalently the strategy profile of players in the first layer game, converges to the NE. The convergence proof relies on the following assumptions.
\begin{assumption}[Step size]\label{ass:step}
$\sum_{n=1}^\infty \gamma_n^2<\infty$, $ \sum_{n=1}^\infty \gamma_n=\infty$
\end{assumption}
The assumption is common in stochastic approximation schemes: $\sum_{n=1}^\infty \gamma_n^2<\infty$, implying $\lim_{n\rightarrow\infty}\gamma_n=0$ reduces the randomness, as learning proceeds whereas $\sum_{n=1}^\infty \gamma_n=\infty$ ensures a horizon of sufficient length.

For any joint actions $s\in \mathcal{S}$, recall that $v(s)$ is the concatenation of individual payoff gradients, i.e., $v(s)_i=\nabla_{s_i} J^i(s_i,s_{-i})$. The following assumption gives a sufficient condition for the existence of unique Nash equilibrium \cite{rosen65concave}.
\begin{assumption}[Strict Monotonicity]\label{ass:monotone}
$\langle v(s')-v(s), s'-s\rangle\geq 0, \text{ for all } s, s'\in \mathcal{S}$. The equality holds if and only if $s'=s$. 
\end{assumption}
We note that it is a technical assumption that aligns with the diagonal strict convexity (DSC) proposed in the seminal work \cite{rosen65concave}.  Much of the literature related to continuous games and applications has been built on this condition \ref{ass:monotone}. Because of the similarity between this condition and operator monotonicity conditions in optimization, games with the monotonicity condition \Cref{ass:monotone} are often referred to as monotone games. As shown in \cite{rosen65concave}, monotone games admit a unique NE, which is also the unique solution of the following variational inequality \cite{mertikopoulos2019learning}: $\sum_{i\in \mathcal{N}}\langle v_i(s), s_i-s^*_i\rangle>0, \text{ for all } s\neq s^*$.

The verification of \Cref{ass:monotone} relies a second-order test based on the Hessian block matrix of the game \cite{rosen65concave}. Denote by $H^{\mathcal{G}}=(H_{ij}^\mathcal{G})_{i,j\in \mathcal{N}}$ the Hessian matrix of the game with its $(i,j)$-block defined by
\begin{align*}
    H_{ij}^\mathcal{G}=\frac{1}{2}\frac{\partial J_i}{\partial s_j \partial s_i}+\frac{1}{2}\left(\frac{\partial J_j}{\partial s_i\partial s_j}\right)^\tp.
\end{align*}
 The second-order test proposed in \cite[Theorem 6]{rosen65concave} states that for any $s\in \mathcal{S}$, if $H^\mathcal{G}(s)$ is positive definite: $z^\tp H^\mathcal{G}(s)z>0$ for every nonzero $s$, then $\mathcal{G}$ is strictly monotone, i.e., \Cref{ass:monotone} holds. Direct calculation shows that the $H^\mathcal{G}$ is block diagonally dominant \cite{feingold65blockdiag}, as the off-diagonal matrices are singular. Hence, it is reasonable to assume \Cref{ass:monotone} holds.

To prove the convergence of a sequence $\{s_n\}$, it suffices to show that the sequence of Bregman divergence generated by the sequence converges. Based on this, we divide our convergence analysis into two parts: 1) we first show that the sequence of Bregman divergence $\{D(s^*, s_n)\}$ does converge, where $s^*$ denotes the NE, though the limit point is unknown; 2) then we show that there exists a subsequence of $\{s_n\}$ that converges to NE. The proofs of the following propositions follows the standard argument in the mirror descent literature. Due to the limit of space, we skip the proofs and refer the reader to \cite[Proposition C.2, C.3]{bravo2018bandit}  for the details.
\begin{proposition}[Convergence of Bregman divergence]\label{prop:con_breg}
Under the assumption\ref{ass:step}, \ref{ass:monotone}, for the sequence $\{s_n\}$ generated by (\ref{eq:noise_omd}), the Bregman divergence $D(s^*, s_n)$ converges to a finite random variable almost surely, where $s^*$ denotes the NE. 
\end{proposition}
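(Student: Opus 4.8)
The plan is to establish a one-step ``energy inequality'' for the Bregman divergence $D(s^*,\cdot)$ evaluated along the iterates, and then close the argument with a classical almost-supermartingale (Robbins--Siegmund) convergence theorem. Throughout, $s^*$ denotes the Nash equilibrium, which is unique by \Cref{ass:monotone}; $\mathcal{F}_n$ is the natural filtration generated by the history of the process up to iteration $n$; and $\|\cdot\|_*$ is the norm dual to the one in which the aggregate distance-generating function $h=\sum_{i\in\mathcal{N}}h_i$ is $K$-strongly convex. First I would decompose the noisy feedback as $\hat v^n = v(s^n) + U_n$, where $U_n$ is the feedback/communication noise; under the martingale-noise model of \cite[Section~4.1]{bravo2018bandit} one has $\mathbb{E}[U_n\mid\mathcal{F}_n]=0$ and $\sup_n\mathbb{E}[\|U_n\|_*^2\mid\mathcal{F}_n]<\infty$.

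The first substantive step is the prox-mapping estimate: for every $p\in\mathcal{S}$,
\[
D(p,s^{n+1}) \;\le\; D(p,s^n) + \gamma_n\langle \hat v^n,\, p-s^n\rangle + \frac{\gamma_n^2}{2K}\,\|\hat v^n\|_*^2 .
\]
I would derive this from the variational characterization of the prox-mapping in \eqref{eq:multi_omd}, the three-point identity $D(p,s^n) = D(p,s^{n+1}) + D(s^{n+1},s^n) + \langle \nabla h(s^{n+1})-\nabla h(s^n),\, p-s^{n+1}\rangle$, the $K$-strong convexity of $h$ (equivalently the bound $D(s,x)\ge\tfrac{K}{2}\|x-s\|^2$ recalled above), and a Young-type inequality that absorbs the cross term $\gamma_n\langle\hat v^n, s^n-s^{n+1}\rangle$ into $\tfrac{\gamma_n^2}{2K}\|\hat v^n\|_*^2 + D(s^{n+1},s^n)$. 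Setting $p=s^*$ and taking conditional expectations with respect to $\mathcal{F}_n$ eliminates the noise in the linear term (since $s^n,s^*$ are $\mathcal{F}_n$-measurable and $\mathbb{E}[U_n\mid\mathcal{F}_n]=0$), yielding
\[
\mathbb{E}\!\left[D(s^*,s^{n+1})\mid\mathcal{F}_n\right] \;\le\; D(s^*,s^n) - \gamma_n\langle v(s^n),\, s^n-s^*\rangle + \frac{\gamma_n^2}{K}\Big(\|v(s^n)\|_*^2 + \mathbb{E}[\|U_n\|_*^2\mid\mathcal{F}_n]\Big).
\]

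Next I would verify that the right-hand side has the structure required by Robbins--Siegmund. Because $\mathcal{S}$ is compact and $v$ is continuous (the maps $\nabla l^i$ are continuous on the compact sets $\mathcal{U}_i$ and the quadratic disagreement terms are bounded on $\mathcal{S}$), $\sup_n\|v(s^n)\|_*^2<\infty$; together with the bounded noise variance and $\sum_n\gamma_n^2<\infty$ from \Cref{ass:step}, the last term is almost surely summable. By strict monotonicity (\Cref{ass:monotone}) together with the variational inequality characterizing $s^*$ (namely $\langle v(s^n),s^n-s^*\rangle \ge \langle v(s^*),s^n-s^*\rangle \ge 0$), the middle term is nonnegative, so it only helps. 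Applying the Robbins--Siegmund almost-supermartingale convergence theorem to the nonnegative sequence $D(s^*,s^n)$ then gives that it converges almost surely to a finite random variable, and, as a byproduct, that $\sum_n\gamma_n\langle v(s^n),s^n-s^*\rangle<\infty$ almost surely --- a fact I would reuse in the subsequent subsequence-convergence argument. In the deterministic case \eqref{eq:multi_omd}, the same chain of inequalities holds verbatim with $U_n\equiv 0$, so the limit is deterministic.

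I expect the main obstacle to be the first step: obtaining the prox-mapping/energy inequality with the correct constant $\tfrac{1}{2K}$ and the right sign conventions for the minimization formulation of $\mathcal{G}$ (the update $s^{n+1}=P_{s^n}(-\gamma_n\hat v^n)$ together with the ``$\langle v,s-s'\rangle$'' convention in \eqref{eq:prox-map} requires care). Once this inequality and the boundedness of $\|v(s^n)\|_*$ and of the conditional noise variance are in hand, the remainder is routine stochastic-approximation bookkeeping, which is why the detailed estimates can be deferred to \cite[Proposition~C.2,~C.3]{bravo2018bandit}.
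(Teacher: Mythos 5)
Your proposal is correct and follows essentially the same route as the paper, which itself defers to the standard mirror-descent argument of \cite[Proposition C.2]{bravo2018bandit}: the prox-mapping energy inequality with constant $\tfrac{1}{2K}$, the cancellation of the martingale noise in the linear term under conditioning, the sign of the drift term via monotonicity plus the variational characterization of $s^*$, and the Robbins--Siegmund closure are exactly the ingredients of the cited proof. The byproduct $\sum_n \gamma_n\langle v(s^n),s^n-s^*\rangle<\infty$ that you flag for reuse is likewise how the subsequent subsequence result is obtained.
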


\begin{proposition}[Convergence of a subsequence]\label{prop:con_sub}
Under the assumption\ref{ass:step}, \ref{ass:monotone}, there exists a subsequence $\{s_{n_k}\}$ of $\{s_n\}$, which converges to the unique NE almost surely.
\end{proposition}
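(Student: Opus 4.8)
The plan is to bootstrap on \Cref{prop:con_breg}. Because the joint distance-generating function $h$ is $K$-strongly convex, its Bregman divergence satisfies $D(s^*,s_n)\ge\tfrac{K}{2}\|s_n-s^*\|^2$, so the almost-sure convergence of $\{D(s^*,s_n)\}$ to a finite limit already confines $\{s_n\}$ to a bounded set; together with the closedness of $\mathcal{S}$ this makes $\{s_n\}$ relatively compact, hence it possesses convergent subsequences. It therefore suffices to show that \emph{every} accumulation point of $\{s_n\}$ equals the unique NE $s^*$.

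First I would isolate the one-step ``energy inequality'' for the recursion (\ref{eq:noise_omd}) --- exactly the estimate that drives \cite[Proposition C.2]{bravo2018bandit} --- which follows from the three-point identity for the Bregman divergence and the variational characterization of the prox-mapping (\ref{eq:prox-map}). Schematically, for the NE $s^*$,
\[
D(s^*,s_{n+1})\le D(s^*,s_n)-\gamma_n\langle v(s_n),s_n-s^*\rangle+C\gamma_n^2+\gamma_n\xi_n ,
\]
where $C$ bounds the squared payoff-gradient norms over $\mathcal{S}$ --- finite because $\mathcal{S}$ is compact and $v$ is continuous (recall $\nabla l^i$ is continuous by Assumption~\ref{assump 1} and the remaining terms are polynomial in $(u,m)$) --- and $\xi_n$ is a martingale-difference term due to the gradient noise, which vanishes identically for the deterministic scheme (\ref{eq:multi_omd}). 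Telescoping this inequality and using $\sum_n\gamma_n^2<\infty$ from \Cref{ass:step}, the boundedness just noted, the convergence of $\{D(s^*,s_n)\}$ from \Cref{prop:con_breg}, and --- in the stochastic case --- a Robbins--Siegmund supermartingale-convergence argument to absorb $\sum_n\gamma_n\xi_n$, I would obtain
\[
\sum_{n=1}^{\infty}\gamma_n\,\langle v(s_n),s_n-s^*\rangle<\infty\quad\text{a.s.}
\]
By \Cref{ass:monotone} and the strict variational inequality stated above, $\sum_{i\in\mathcal{N}}\langle v_i(s),s_i-s_i^*\rangle>0$ for all $s\neq s^*$, so each summand is nonnegative and equals zero only at $s^*$; combined with $\sum_n\gamma_n=\infty$, this forces $\liminf_n\langle v(s_n),s_n-s^*\rangle=0$.

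It remains to identify the accumulation point. Pick a subsequence $\{s_{n_k}\}$ along which $\langle v(s_{n_k}),s_{n_k}-s^*\rangle\to 0$, and, exploiting relative compactness, refine it so that $s_{n_k}\to\hat s\in\mathcal{S}$. Continuity of $v$ then gives $\langle v(\hat s),\hat s-s^*\rangle=0$, which by the strict variational inequality is only possible if $\hat s=s^*$; hence $\{s_{n_k}\}$ converges to the unique NE and the claim follows. I expect the main obstacle to be the middle step: writing the energy inequality cleanly and, in the noisy regime, carrying the summability $\sum_n\gamma_n\langle v(s_n),s_n-s^*\rangle<\infty$ through a supermartingale-convergence (Robbins--Siegmund) argument while justifying the uniform gradient bound from compactness of $\mathcal{S}$ and continuity of the payoff gradients; once that bound and the energy estimate are secured, the identification of the limit point via continuity and strict monotonicity is routine.
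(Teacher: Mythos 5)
Your argument is correct and is precisely the standard mirror-descent argument (energy inequality, summability of $\sum_n\gamma_n\langle v(s_n),s_n-s^*\rangle$, then identification of the accumulation point via continuity and strict monotonicity) that the paper itself invokes: the paper omits the proof and refers the reader to Propositions C.2 and C.3 of the cited reference \cite{bravo2018bandit}, whose reasoning is exactly what you reconstruct. No substantive difference in approach.
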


Finally, we prove that the proposed multi-agent OMD converges to the NE of $\mathcal{G}$.
\begin{theorem}\label{thm:asym_con}
Under assumptions mentioned above, $\{s_n\}$, the sequence generated by (\ref{eq:noise_omd}) converges to an NE almost surely.
\end{theorem}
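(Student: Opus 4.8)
The plan is to combine the two propositions just established into the standard ``one Fej\'er-like convergent sequence plus a convergent subsequence forces full convergence'' argument. Let $s^*$ denote the unique NE of $\mathcal{G}$, whose existence and uniqueness are guaranteed by \Cref{ass:monotone} together with the variational characterization of monotone games \cite{rosen65concave}. First I would invoke \Cref{prop:con_breg} to obtain that the nonnegative sequence $\{D(s^*,s_n)\}$ converges almost surely to some finite random variable $L\geq 0$; the task is then to identify $L=0$ almost surely.

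To pin down $L$, I would use \Cref{prop:con_sub}: on the almost-sure event on which a subsequence $\{s_{n_k}\}$ converges to $s^*$, the reciprocity/continuity property of the Bregman divergence recorded in the supplementary material --- which holds because $h=\sum_{i\in\mathcal{N}}h_i$ is continuous on its domain and $\nabla h$ is a continuous selection of $\partial h$ --- yields $D(s^*,s_{n_k})\to 0$ as $k\to\infty$. Since $\{D(s^*,s_n)\}$ already converges along the whole sequence to $L$, every subsequence must have that same limit, hence $L=\lim_{k}D(s^*,s_{n_k})=0$ almost surely.

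Finally I would convert convergence of the Bregman divergence into norm convergence. By the $K$-strong convexity of the joint distance-generating function $h$, we have $D(s^*,s_n)\geq \tfrac{1}{2}K\|s_n-s^*\|^2\geq 0$ for all $n$; combining this with $D(s^*,s_n)\to 0$ almost surely and squeezing gives $\|s_n-s^*\|\to 0$ almost surely, i.e., $\{s_n\}$ converges to the NE $s^*$ almost surely, which is exactly the claim. Since \Cref{prop:con_breg} and \Cref{prop:con_sub} are already stated for the stochastic recursion (\ref{eq:noise_omd}) in their almost-sure form, no separate treatment of the deterministic case (\ref{eq:multi_omd}) is needed.

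The main obstacle I anticipate is the middle step --- forcing $L=0$ --- because it rests on the direction ``$s_{n_k}\to s^* \Rightarrow D(s^*,s_{n_k})\to 0$'' of the Bregman reciprocity property, which can fail at boundary points of $\mathcal{S}$ where $h$ need not be continuous. One therefore either assumes $s^*$ lies in the (relative) interior of $\mathcal{S}=\mathcal{U}\times\mathcal{M}$, or imposes the extra regularity on $h$ that makes $D(s^*,\cdot)$ continuous at $s^*$; both are standard in the mirror-descent literature and, given the compactness and product structure of $\mathcal{S}$ here, are mild. The remaining ingredients --- the almost-sure convergence of the Bregman sequence and the extraction of an NE-convergent subsequence --- are delivered verbatim by \Cref{prop:con_breg} and \Cref{prop:con_sub}, so the proof of \Cref{thm:asym_con} is essentially a two-line synthesis of these facts with the strong-convexity bound.
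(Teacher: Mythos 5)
Your proposal is correct and follows essentially the same route as the paper's own proof: extract the NE-convergent subsequence from \cref{prop:con_sub}, pass to the Bregman divergence to identify the limit in \cref{prop:con_breg} as zero, and then use the strong-convexity bound $D(s^*,s_n)\geq\frac{1}{2}K\|s_n-s^*\|^2$ to recover norm convergence. The only difference is that you explicitly flag the Bregman reciprocity step ($s_{n_k}\to s^*\Rightarrow D(s^*,s_{n_k})\to 0$) as a point needing regularity of $h$ at $s^*$, which the paper uses without comment --- a reasonable extra caution, not a divergence in method.
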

\begin{proof}[Proof of \cref{thm:asym_con}]
By \cref{prop:con_sub}, there is a subsequence $\{s_{n_k}\}$ such that $\lim_{k\rightarrow\infty}\|s_{n_k}-s^*\|=0$ almost surely. This implies that $\lim_{k\rightarrow\infty}D(s^*,s_{n_k})=0$ or equivalently $\liminf_{n\rightarrow \infty}D(s^*,s_n)=0$ almost surely. Then by \cref{prop:con_breg}, the limit $\lim_{n\rightarrow\infty}D(s^*,s_n)$ exists almost surely. It follows that 
\begin{align*}
    \lim_{n\rightarrow\infty}D(s^*,s_n)=\liminf_{n\rightarrow \infty}D(s^*,s_n)=0,
\end{align*}
which shows that $s_n$ converges to $s^*$ almost surely. 
\end{proof}


\section{Streaming Data}
\label{sec:streaming}
In practice, the data available for a learning task is often obtained sequentially. 
Therefore, when computation power is sufficient, one will prefer an online learning approach which allows to constantly update the learned models using new data samples. 
In this section, we adapt our framework to the scenario of streaming data by deriving a distributed Kalman filter \cite{bertsekas1997nonlinear} under quadratic cost.
In the sequel, we will adopt (\ref{eqn BRD}) in Alg. \ref{alg:1}.

\par

\subsection{Mean-Square-Error Loss}
We consider a quadratic model at each node. Define $X_i:=\sum_{k=1}^{K_i}x_{i,k}x_{i,k}^T\in\mathbb{R}^{d\times d}$ and $Y_i:=2\sum_{k=1}^{K_i}x_{i,k }y_{i,k}\in \mathbb{R}^{d\times 1}$.
Omitting the terms independent of $u_i$, we arrive at the cost function for player $i\in\mathcal{N}$ with the mean-square-error (MSE) local learning cost as
\begin{equation}
    J^i(u,m^i)=
     \frac{\alpha_i}{K_i}\left(u_i^TX_iu_i+Y_i^Tu_i\right) 
     +\sum_{j\neq i, j\in\mathcal{N}}m^i_j||u_i-u_j||^2_2. 
    \label{eqn cost func mse}
\end{equation}

In the first layer, we investigate the first-order optimality condition of (\ref{eqn cost func mse}).
Define $U_{i}:=(
    u_{1} \cdots  u_{i-1} \ \ u_{i+1}  \cdots u_{N}) \in \mathbb{R}^{d\times N-1}$. Assume that $B_i:=(\frac{\alpha_i}{K_i}X_i+\beta_i\mathrm{I})^{-1}$ exists. Using Assumption \ref{assump 1}, we obtain the best response of player $i$:
\begin{equation}
    u_{i}^*=B_i(U_{i}m^i-\frac{\alpha_i}{2K_i}Y_i).
    \label{eqn BR quadratic}
\end{equation}
To reduce the computation of inverting a matrix, when a player observes $d\geq K_i$, she can use the Sherman-Morrison-Woodbury formula \cite{woodbury1950inverting} to obtain $B_i$. 
Define $V_i=(\frac{\alpha_i}{K_i})^{\frac{1}{2}}\begin{pmatrix}
x_{i,1} & x_{i,2} & \dots & x_{i,K_i} \\
    \end{pmatrix} \in \mathbb{R}^{d \times K_i}$.
Then, we obtain that the $d\times d$ matrix admits $B_i=\beta_i^{-1}\mathrm{I}-V_i[\mathrm{I}+\beta_iV_i^TV_i]^{-1}V_i^T$, which only involves the inversion of the $K_i\times K_i$ matrix $\mathrm{I}+\beta_iV_i^TV_i$. The best-response dynamics (\ref{eqn BRD}) directly follow from (\ref{eqn BR quadratic}).

\par
In the second layer, we obtain the objective of (\ref{eqn netowrk formation prob}) by substituting (\ref{eqn BR quadratic}) into (\ref{eqn cost func mse}). Defining $A_i:=U_im^i-\frac{\alpha_i}{2K_i}Y_i$, we obtain the objective as follows:
\begin{equation}
    \begin{aligned}
        &J^i(u^*_i,u^*_{-i},m^i)= 
        \frac{\alpha_i}{K_i}A_i^TB_i^TX_iB_i A_i  +\mathrm{1}^Tm^iA_i^TB_i^TB_iA_i 
      \\
        &  + \frac{\alpha_i}{K_i}Y_i^TA_i 
        -2\sum_{j\neq i,j\in\mathcal{N}}m^i_ju_j^TB_iA_i 
        + 
        \sum_{j\neq i,j\in\mathcal{N}}m^i_ju_j^Tu_j.
    \end{aligned}
    \label{eqn objective of network formation}
\end{equation}

\subsection{Streaming Data}

Consider the scenario where player $i$ observes a new data point $\{x_{i,k},y_{i,k}\}$ at each of the arrival times $k=1,2,...,K_i$. Let $u_i^k$ be the learning decision of player $i$ chosen based on data $\{x_{i,k'},y_{i,k'}\}_{k'=1}^{k}$. In a fixed iteration of the first layer of Alg. \ref{alg:1}, player $i$ observes the current network structure $m^{i}$ and the current learning actions of connected players $u_j$ such that $m^i_j\neq 0$. Define $X_i^k:=\frac{1}{k}\sum_{k'=1}^{k}x_{i,k'}x_{i,k'}^T$ and $Y_i^k:=\frac{2}{k}\sum_{k'=1}^{k}x_{i,k'}y_{i,k'}$. Player $i$'s cost function at time $k$ is
\begin{equation}
    J^{'i}_k(u_i^k)=\alpha_i\left( u_i^{kT}X_i^ku_i^k+Y_i^{kT}u_i^k \right)+
    \sum_{j\neq i,j\in\mathcal{N}}m^{ik}_j||u_i^k-u_j||^2_2.
    \label{eqn kalman cost}
\end{equation}
Define $\Lambda_i^k:=\alpha_iX_i^k+(\mathrm{1}^Tm^{i})\mathrm{I}$ and $\Gamma_i^{kT}:=\alpha_i Y_i^{kT}-2\sum_{j\neq i, j\in\mathcal{N}}m^{i}_j u_j^{T}$, where $\mathrm{1}$ is the vector of all ones and $\mathrm{I}$ is the identity matrix. We drop the terms independent of $u_i^k$ and reformulate (\ref{eqn kalman cost}) as 
\begin{equation*}
    J^{'i}_k(u_i^k)=u_i^{kT}\Lambda_i^ku_i^{k}+\Gamma_i^{kT}u_i^k.
\end{equation*}
The choice of learning decision $u_i^{k}$ at $k$ solves $\nabla_{u_i^k}J^{'i}_k(u_i^k)=0$, which is equivalent to 
\begin{equation*}
    2\Lambda_i^k u_i^k+\Gamma_i^k=0.
    \label{eqn FOC kalman}
\end{equation*}
At $k+1$, we observe a new data point $\{x_{i,k+1},y_{i,k+1}\}$ and update the previous action $u_i^k$ based on it to obtain $u_i^{k+1}$.

Define $\Sigma_i^k:=\frac{1}{k+1}\mathrm{1}^T m^{i}\mathrm{I}+\frac{\alpha_i}{k+1}x_{i,k+1}x_{i,k+1}^T$ and $\Omega_i^k:=\frac{\alpha_i}{k+1}x_{i,k+1}^Ty_{i,k+1}-\frac{1}{k+1}\sum_{j\neq i}m^{i}_j u_j^{T}$ for all $i\in\mathcal{N}$.
Player $i$'s update of leaning action at each arrival instance $k=1,2,...,K_i-1$ is given recursively by
\begin{equation}
    u_i^{k+1}=u_i^k-(\frac{k}{k+1}\Lambda_i^k+\Sigma_i^k)^{-1}\Sigma_i^ku_i^k-\Omega_i^k,
    \label{eqn Kalman filter}
\end{equation}
starting from $u_i^1=-\frac{1}{2}(\Lambda_i^1)^{-1}\Gamma_i^1$.
\par
From the definitions of $\Sigma_i^k$ and $\Omega_i^k$, we observe that there is no transmission of data among nodes. The recursive structure of (\ref{eqn Kalman filter}) enables the incremental updates of learning actions based on the current learning action and the new data point at all the nodes in a distributed fashion. The above derivations assume that the other players' learning actions $u_j^k, \forall j\neq i$ are fixed. We can also consider communications of learning actions at each arrival time of local data point. In this case, we obtain an update rule similar as (\ref{eqn Kalman filter}), with $u_j^k, k=1,2,...,K_j, j\in\mathcal{N}$ substituting $u_j,j\in\mathcal{N}$. By exchanging learning actions at each arrival $k$, we constantly synchronize new information in the new data points observed by all the nodes while maintain the distributed architecture where data stays local.
\par
We remark that the generalization of (\ref{eqn Kalman filter}) when the players' cost functions are of the form of (\ref{eqn general cost func}) leads to the extended Kalman filter \cite{bertsekas1997nonlinear}.

\section{Case Study}
\label{sec:case}
In this section, we elaborate the commutative framework under (\ref{eqn BRD}) and corroborate the results numerically using data of telemonitoring measurements of Parkinson's disease \cite{tsanas2009accurate}. 
\par
We consider a scenario where hospitals, represented by nodes aim to create a network to exchange information to improve the network-wide learning in order to improve services. Some of the hospitals are general hospitals while the others  dedicate to specific medical specialties. As a consequence, local data collected from past patients at each hospital varies. The hospitals use machine learning on the local data to provide medical services. Since the local data can be inadequate, a hospital may want to seek inputs from other hospitals. We use our framework to enable collaborations among the hospitals so that the learning results from the connected hospitals improve its local services. 
In addition, every hospital is constantly collecting data from patients and updating its existing services. We address streaming data at all the hospitals in one iteration of the first layer of our game-theoretic model.

To model general hospitals, we create unbiased data from the telemonitoring dataset by performing a random shuffling at the beginning of each experiment. To model specialty hospitals, we sort the measures (features) according to the scores (labels). After the sorting, the closer the nodes' indices, the more similar the local data at these nodes.

\par
We first examine network structures under Assumption \ref{asspm 2 sysmetric network}.
Consider $N=6$ and $\alpha_i=1, \forall i\in\mathcal{N}$.
\begin{figure}[ht]
\centering
\vspace{-5mm}\subfigure[$\beta_i=1, \forall i\in\mathcal{N}$]{\includegraphics[width=0.22\textwidth]{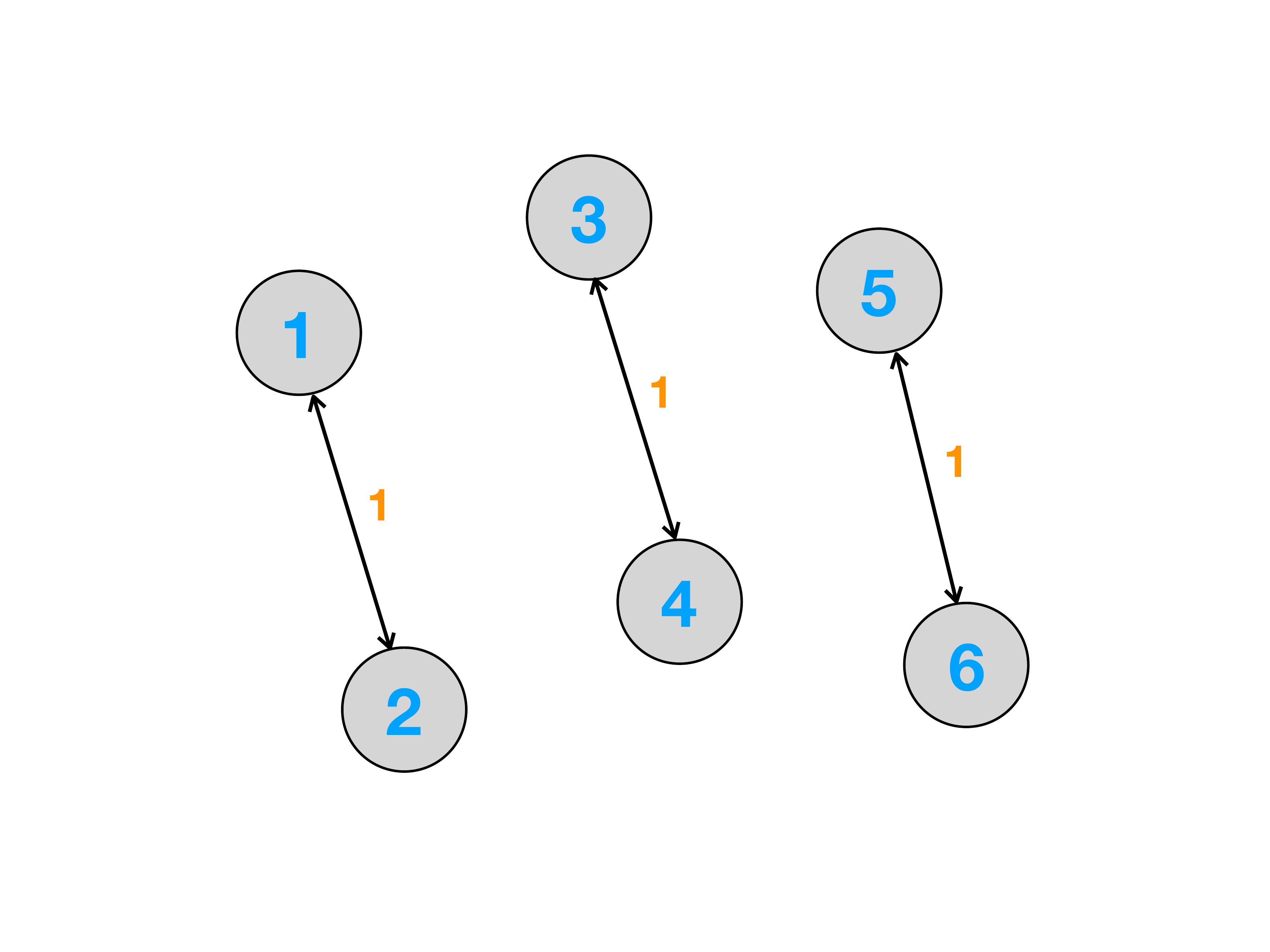}
    \label{fig:subfig1}
}
\subfigure[$\beta_i=1.5, \forall i\in\mathcal{N}$]{\includegraphics[width=0.22\textwidth]{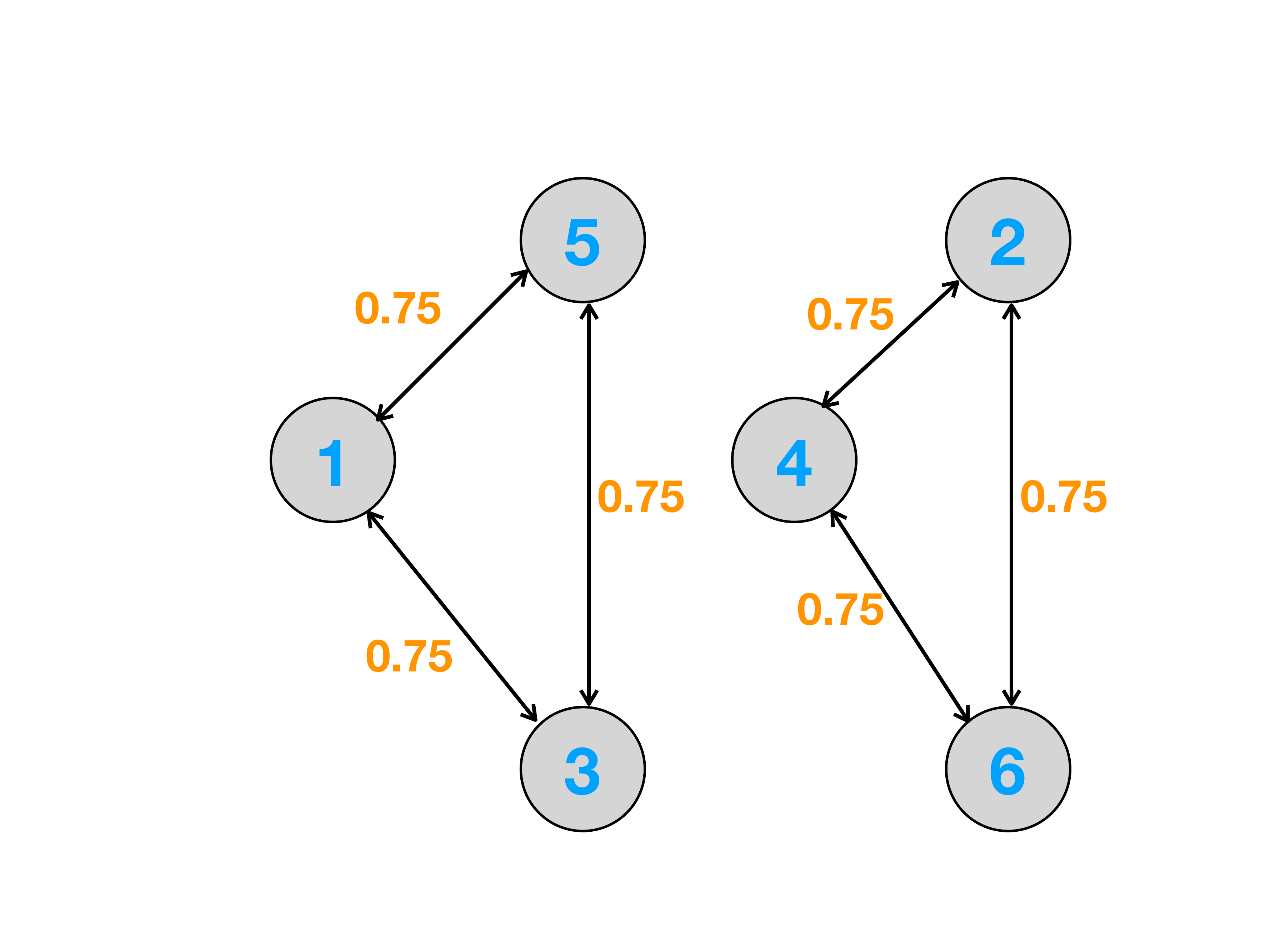}
    \label{fig:subfig2}
}
\caption{Networks structures under even budgets and unbiased data. The networks are not connected. Links of the sub-networks have the same weights.} 
\label{fig:fig1}
\end{figure}

\begin{figure}[ht]
\centering
\vspace{-10mm}\subfigure[Dividing $\mathcal{N}$ into $2$ subsets.]{\includegraphics[width=0.22\textwidth]{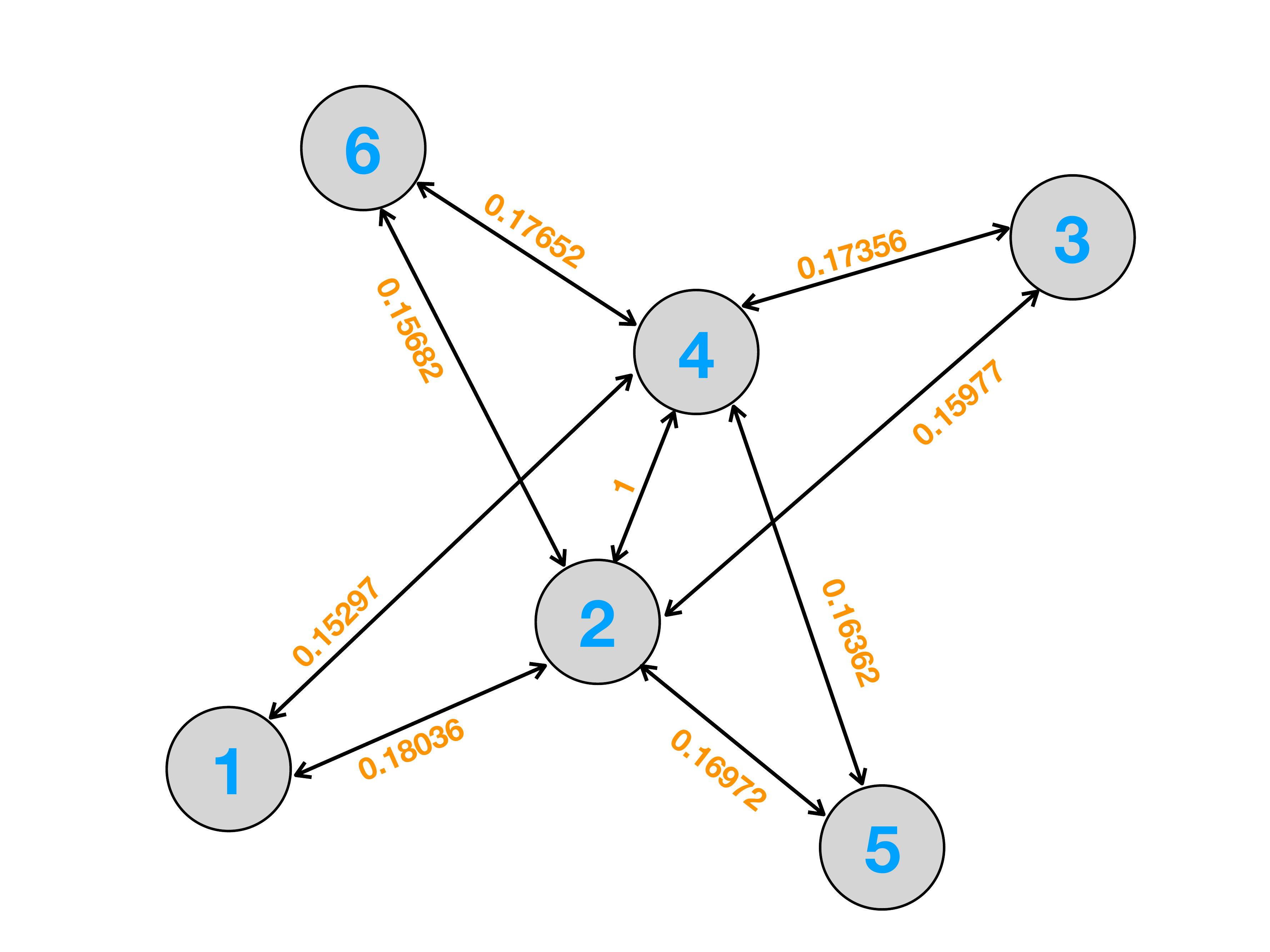}
    \label{fig:subfig3}
}
\subfigure[Dividing $\mathcal{N}$ into $3$ subsets.]{\includegraphics[width=0.22\textwidth]{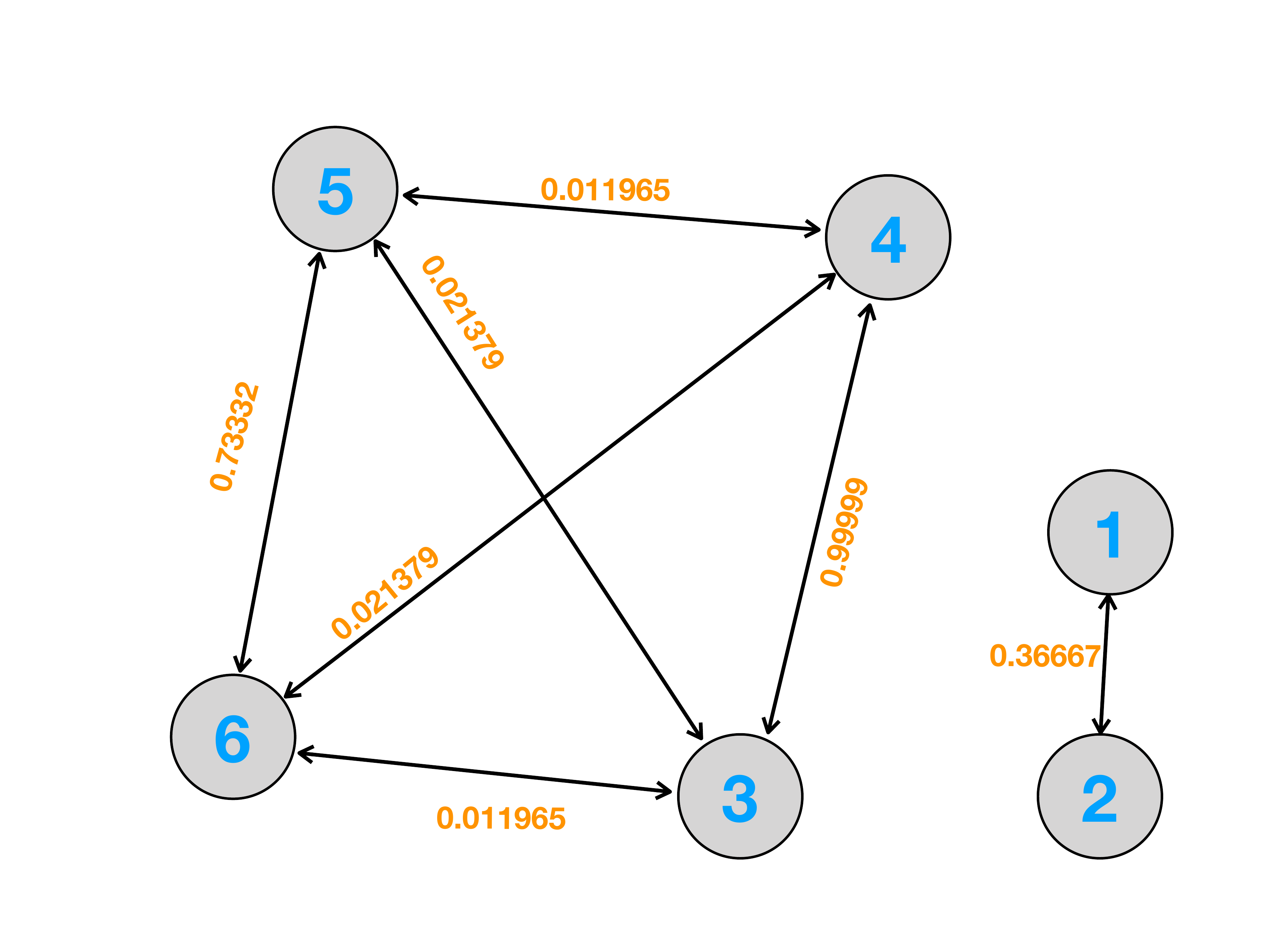}
    \label{fig:subfig5}
}
\caption{Networks structures when $\mathcal{N}$ is divided into subsets based on nodes' budgets. The links in (a) concentrate on bridging nodes from different subsets. The links in (b) concentrate on bridging nodes from the same subset. } 
\label{fig:fig2}
\end{figure}

\subsection{Structures of Small Networks}
\label{sec:case:small nets}
Network structures obtained using our framework with unbiased data are shown in Fig. \ref{fig:fig1} and Fig. \ref{fig:fig2}. They are sparse networks and corroborate the discussions around (\ref{eq:equivalent player i's network formation problem}).  Fig. \ref{fig:fig1}
shows that the most efficient network may not be connected. 
Based on Corollary \ref{coro social walfare}, the disconnectedness implies that a fully connected network may not be optimal in terms of social welfare from a macroscopic perspective. In practice, this suggests dividing the nodes into subsets and considering games of smaller scales, where one node's learning parameters only depend on nodes whose local data serve as a useful reference. 
Equivalently, these games of smaller scales correspond to a disconnected information structure, where nodes included in a game share similar data or model for distributed learning.
Furthermore, this disconnectedness shows that the condition that all the nodes reach the same learning parameter as required by a standard distributed learning framework may be overly demanding in view of the universal learning loss. 
\par
Results in Fig. \ref{fig:fig2} justify the discussions in Section \ref{sec:symmetric:network structure}. When the players with the same budget are grouped together by the subsets of $\mathcal{N}$, the link weights tend to gather either inside individual subsets or between different subsets. 
The structure of Fig. \ref{fig:fig2} (a) coincides with the core-periphery architectures discussed in \cite{galeotti2010law}. In our setting, a larger curiosity value $\beta_i$ indicates a lower quality of local data. This observation seems opposite to the one in \cite{galeotti2010law}, where the core nodes have a higher information level. However, under the potential game, nodes in our framework can be viewed to choose the network structure cooperatively. Nodes with low-quality data have a higher potential in improving the overall performance. Hence, the joint network formation makes these nodes at the core positions. From the perspective of Section \ref{sec:symmetric:network structure}, the core-periphery architecture is the consequence of link weights gathering between the subsets of $\mathcal{N}$. Nodes at the core positions are from those subsets of $\mathcal{N}$ which possesses small cardinalities and large budgets.
We also perform experiments when the data is biased and observe similar patterns.

\par
\begin{figure}[ht]
\centering
\subfigure[Unbiased data. Link weights of all the nodes concentrate on a subset of nodes. ]{\includegraphics[width=0.45\textwidth]{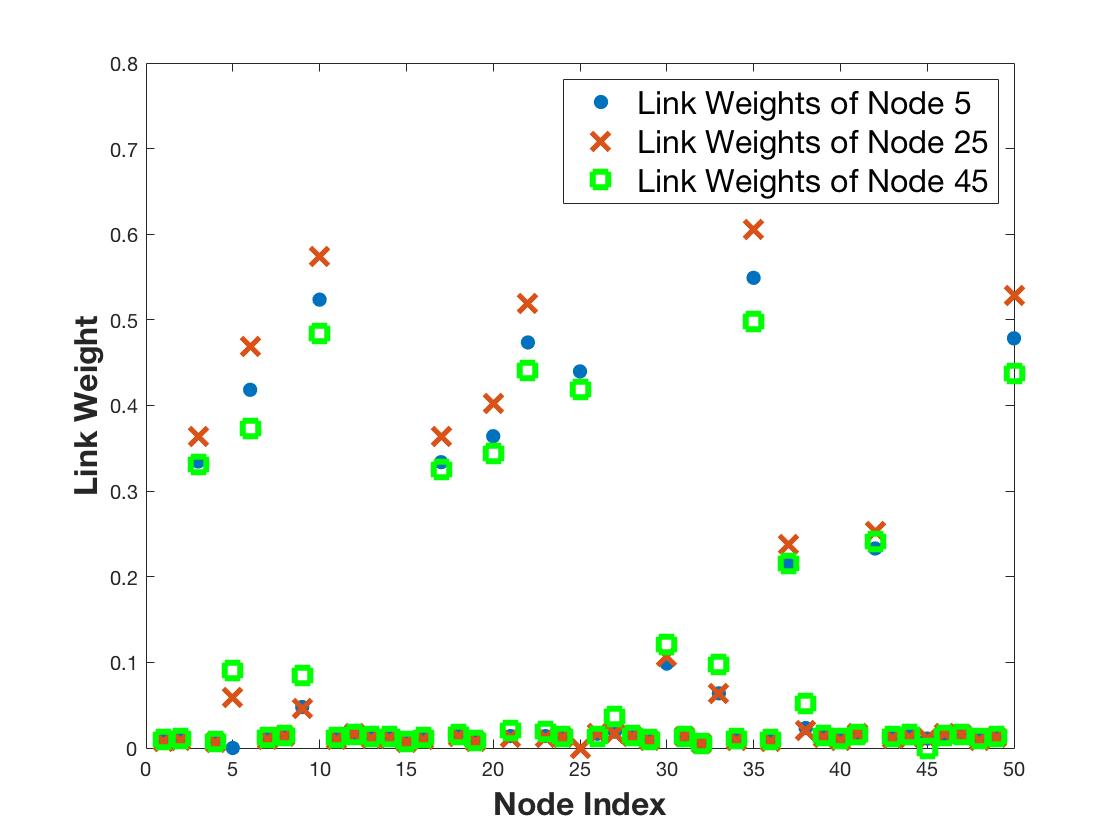}
    \label{fig:subfig5}
}

\subfigure[Biased data. Link weights of a node concentrate on the nodes whose indices are close to it. ]{\includegraphics[width=0.45\textwidth]{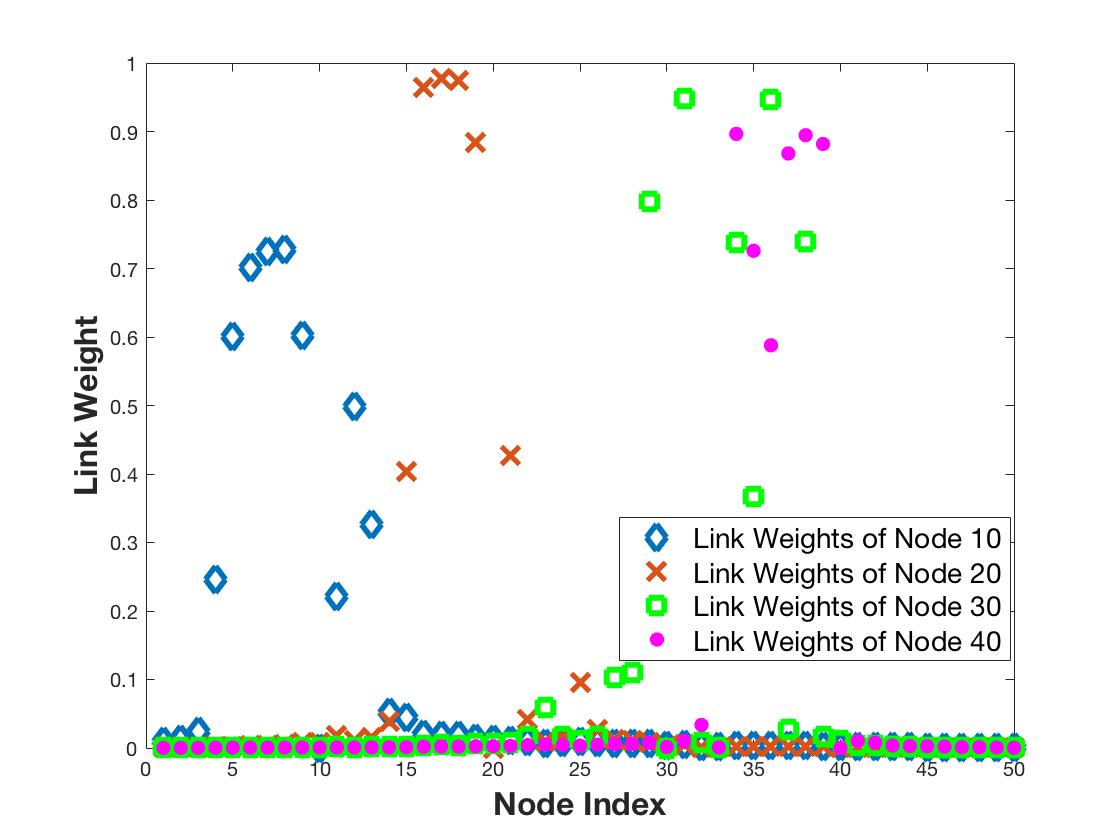}
    \label{fig:subfig6}
}

\caption[Optional caption for list of figures]{Link weight distributions of nodes. The total number of nodes $N$ is $50$. All the nodes $i\in\mathcal{N}$ has the same budget $\beta_i=5$. } 
\label{fig:3}
\end{figure}

\subsection{Link Weight Distributions of Large Networks}
\label{sec:case:large nets}
When the number of nodes becomes large, we present the link weight distributions obtained using Alg. \ref{alg:1} without Assumption \ref{asspm 2 sysmetric network} in Fig. \ref{fig:3}. The results of unbiased data and biased data have different patterns. As in Fig. \ref{fig:3} (a), when the data is unbiased, link weight distributions of all nodes in $\mathcal{N}$ are approximately the same. A subset of nodes is more popular than the others since all nodes connect with them through large link weights. These nodes represent the hospitals whose local data is either credible or abundant. By connecting with them, a hospital increases its own learning results and thus improves local medical services. 
When the data are biased, the link weights of a node concentrate on the nodes whose local data are similar to its own as in Fig. \ref{fig:3} (b). In this case, the connected nodes possess data of similar usage, which shows the ability of our framework in selecting useful targets to connect with. A node does not connect with another node whose index stays too far. The reason lies in that the distributions of data at these two nodes vary too much. Hence, a connection between these nodes will not improve local learning performances. 

\subsection{Effects of Reference Information}
\label{sec:case:connection}
Our game-theoretic framework provides extensibility to the network, making it change dynamically according to the learning parameters of nodes. In Alg. \ref{alg:1}, each refinement of the network structure in the second layer can involve 
arrival or departure of nodes. This dynamical environment is out of nodes' own willingness since 
nodes build connections by themselves. On the one hand, an existing hospital network will not block a new hospital whose local learning results are inspiring to the community. On the other hand, a hospital can disconnect from a network at any time if it benefits the hospital's own services. 
\par
\begin{figure}[ht]
\centering
\includegraphics[width=0.5\textwidth]{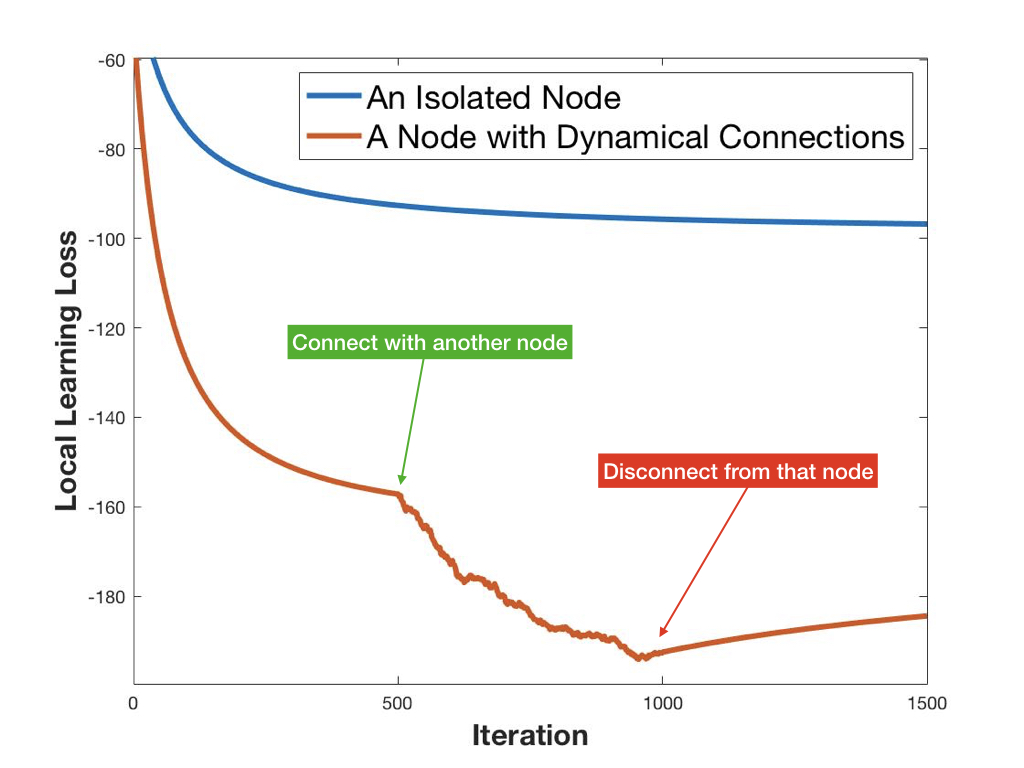}
\caption{Comparison of local learning losses between an isolated node and a node with dynamical connections. The connection improves local learning. } 
\label{fig:loss with arrival and leave}
\end{figure} 
Fig. \ref{fig:loss with arrival and leave} shows a beneficial connection. With a reference learning result, a hospital observes an improvement in its local medical services. When the reference becomes unavailable, the medical services of the hospital gradually return to the state where the connection is absent.
At this state, the learning result of the hospital emphasizes the welfare of local patients. When the node connects with others at iterations 500 to 1000 in Fig. \ref{fig:loss with arrival and leave}, she observes noise coming from the communication link. Our learning algorithm based on OMD guarantees a stochastic convergence. Hence, a hospital in the network can focus on improving medical services rather than technical details related to communication links. 
\par
We have also performed similar experiments on unbiased data but did not observe obvious improvements of local learning losses when a node connects with others. The reason lies in that local data is sufficiently ample and credible after the shuffling.

\par

\section{Conclusion}
\label{sec:conclusion}
In this paper, we have introduced a game-theoretic framework for distributed learning over networks. 
In the framework, we have modeled the link weights on a network as rational choices of players.
In addition to the learning parameters, this design of player's actions has made the configurations of networks outputs from the game.
We have presented a commutative method to obtain equilibrium learning and network formation actions of players. We have observed that the game belongs to the class of potential games when we consider undirected networks. 
Leveraging this fact, we have shown the convergence of the proposed commutative algorithm. 
Besides, we have analyzed the properties of the network structures that are likely to appear. Furthermore, we have proved that our framework performs no worse than standard distributed learning frameworks in the sense of social welfare. 
A concurrent method has also been proposed to solve the game in a generic setting.
Furthermore, we have adapted our framework to scenarios involving streaming data by deriving a distributed Kalman filter.
In the numerical experiments, we have shown possible network patterns obtained from our framework. We have illustrated the extensibility provided by our framework by showing the change of the local learning loss at a node when there are new connections with or disconnections from other nodes. 

\par
Our framework has presented a general class of machine learning algorithms. In future work, we would characterize the connectivity of the resulting networks. Partitions of the nodes would provide insights on the relations between local data at different nodes. We would also extend the game settings. A Stackelberg game would capture behaviors of nodes when a monopoly of data exists, and a Bayesian game would help us understand the incentives of nodes to reveal fake learning information when data security is a concern.

\ifCLASSOPTIONcaptionsoff
  \newpage
\fi



%



\bibliographystyle{IEEEtran}
\bibliography{bibliography.bib}
\nocite{*}

\end{document}